\definecolor{darkred}  {rgb}{0.5,0,0}
\definecolor{darkblue} {rgb}{0,0,0.5}
\definecolor{darkgreen}{rgb}{0,0.5,0}
\newtheorem{theorem}{Theorem}[section]
\newtheorem{lemma}[theorem]{Lemma}
\newtheorem{proposition}[theorem]{Proposition}
\newtheorem{conjecture}[theorem]{Conjecture}
\newtheorem{corollary}[theorem]{Corollary}
\newtheorem*{citetheorem}{Theorem}
\theoremstyle{definition}
\newtheorem{definition}[theorem]{Definition}
\newcounter{resultcounter}
\newtheorem{result}[resultcounter]{Result}
\theoremstyle{remark}
\newtheorem{remark}[theorem]{Remark}
\numberwithin{equation}{section}
\newcommand{\bF}{\mathbb{F}}
\newcommand{\la}{\langle}
\newcommand{\ra}{\rangle}
\newcommand{\cnot}{\mathrm{CNOT}}
\newcommand{\tof}{\mathrm{TOF}}
\newcommand{\cC}{\mathcal{C}}
\newcommand{\cP}{\mathcal{P}}
\newcommand{\cX}{\mathcal{X}}
\newcommand{\cswap}{\mathrm{CSWAP}}
\newcommand{\ccz}{\mathrm{CCZ}}
\newcommand{\sym}{\mathrm{sym}}
\DeclareMathOperator{\SPAN}{span}
\newcommand*{\addFileDependency}[1]{% argument=file name and extension
\typeout{(#1)}% latexmk will find this if $recorder=0
% however, in that case, it will ignore #1 if it is a .aux or 
% .pdf file etc and it exists! If it doesn't exist, it will appear 
% in the list of dependents regardless)
%
% Write the following if you want it to appear in \listfiles 
% --- although not really necessary and latexmk doesn't use this
%
\@addtofilelist{#1}
%
% latexmk will find this message if #1 doesn't exist (yet)
\IfFileExists{#1}{}{\typeout{No file #1.}}
}\makeatother
\begin{document}

\title[Permutation gates in $\cC_3$]{Permutation gates in the third level of the Clifford hierarchy}

\author{Zhiyang He, Luke Robitaille, and Xinyu Tan}
\address{Department of Mathematics, Massachusetts Institute of Technology, Cambridge, MA 02139, USA}
\email{{szhe,lrobitai,norahtan}@mit.edu}

\begin{abstract}
The Clifford hierarchy is a fundamental structure in quantum computation, classifying unitary operators based on their commutation relations with the Pauli group. 
Despite its significance, the mathematical structure of the hierarchy is not well understood at the third level and higher.
In this work, we study permutations in the hierarchy: gates which permute the $2^n$ basis states. 
We fully characterize all the semi-Clifford permutation gates. 
Moreover, we prove that any permutation gate in the third level, not necessarily semi-Clifford, must be a product of
Toffoli gates in what we define as staircase form, up to left and right multiplications of Clifford permutations. 
Finally, we show that the smallest number of qubits for which there exists a non--semi-Clifford permutation in the third level is $7$.
\end{abstract}

\maketitle

\section{Introduction}

The Clifford hierarachy has a simple recursive definition.
The first two levels of the hierarchy $\cC_1$ and $\cC_2$ are the Pauli and Clifford groups respectively.
For $k\geq 3$, the $k$-th level $\cC_k$ is the set of all unitaries $U$ such that $UPU^{\dagger} \in \cC_{k-1}$ for all $P \in \cP_n$. 
In this hierarchy, the Pauli and Clifford groups are well studied and their importance needs no elaboration. 
In contrast, the mathematical structure of $\mathcal{C}_k$ for $k\ge 3$ is far less understood. 

The initial motivation behind understanding the Clifford hierarchy comes from the study of universal fault-tolerant computation. 
The well-known Gottesman-Knill theorem~\cite{gottesman1998knill} states that any circuits made completely of Clifford gates can be efficiently simulated by a classical computer. 
In contrast, adding any non-Clifford gate to the Clifford group forms a universal gate set. 
These fundamental results place non-Clifford gates in an unique position: to realize useful quantum computation in practice, we must have fault-tolerant implementation of non-Clifford gates.

The easiest way to implement a single gate fault-tolerantly is to use a quantum error-correcting code such that the corresponding logical operation is transversal. 
However, the Eastin--Knill theorem rules that no quantum-error correcting code can implement a universal gate set transversally~\cite{Eastin_2009}. 
This no-go result prompts the search of new fault-tolerance techniques.
In a seminal work in 1999~\cite{gottesman99teleportation}, Gottesman and Chuang defined the Clifford hierarchy and showed that the technique of \emph{gate teleportation} can be used to fault-tolerantly implement
gates in any level of the Clifford hierarchy. 
This in particular includes many commonly used non-Clifford gates in $\cC_3$, such as the $T$, Toffoli, and CCZ gates. 
Henceforth, the Clifford hierarchy has found many applications in quantum information science.

Despite its importance, the mathematical structure of the Clifford hierarchy is not well understood.
There is no known closed-form characterization; 
the precise set of gates in $\cC_k$ for $k\geq 3$ is unknown.
In particular, $\cC_k$ no longer forms a group for any $k\geq 3$.\footnote{Note that $\cC_2$, the Clifford group, is the largest finite subgroup (up to phase) of the unitary group that contains the Clifford group~\cite[Theorem 6.5]{nebe2000invariantscliffordgroups}. 
Any other subgroups containing the Clifford group must be dense in the unitary group. 
So if we force a definition where each level of the hierarchy must form a finite group up to phase, then $\cC_k$ can only be the Clifford group for all $k\geq 2$, which is less interesting. } 
In~\cite{Zhou_2000}, 
Zhou, Leung, and Chuang proposed to study the diagonal gates in the hierarchy, which do form a group at every level.
Gates in the form $\phi_1d\phi_2$, where $\phi_1,\phi_2$ are Clifford gates and $d$ is a diagonal gate, are later referred to as \emph{semi-Clifford} operations. 
Cui, Gottesman, and Krishna fully characterized all the diagonal gates in $\cC_k$~\cite{Cui_2017}, and thus
all the semi-Clifford unitaries in the hierarchy as well.

It was once believed that $\cC_3$ should behave as nicely as the diagonal gates in the hierarchy:
Zeng, Chen, and Chuang conjectured in~\cite[Conjecture 1]{zcc} that all gates in $\cC_3$ are semi-Clifford.
However, this conjecture was shown to be false by Gottesman and Mochon with a carefully constructed counterexample on seven qubits, which consists of three controlled SWAP gates and four CCZ gates. 
Beigi and Shor later proved that all gates in $\cC_3$ are in fact \emph{generalized semi-Clifford} gates~\cite{beigi-shor}, which adopt the form $\phi_1\pi d\phi_2$ for some Clifford gates $\phi_1,\phi_2$, a diagonal gate $d$, and a permutation $\pi$.
The conjecture that all gates in $\cC_k$ are generalized semi-Clifford remains open~\cite[Conjecture 2]{zcc}. 

In this paper, we focus on characterizing the permutation gates in the Clifford hierarchy, denoted as $\cC_k^\sym$ for each level $k$. 
A permutation gate on $n$ qubits permutes the $2^n$ computational basis states. 
Intuitively, $\cC_k^\sym$ should be more structured than $\cC_k$ since permutation gates can be constructed as classical reversible circuits.
Indeed, Anderson studied $\cC_k^\sym$ in~\cite{anderson} and made a conjecture (Conjecture D.1), which, if true, would imply that all gates in $\cC_3^\sym$ are semi-Clifford.
In this work, we disprove Anderson's conjecture with a counterexample and present a more fine-grained characterization of $\cC_k^\sym$ and $\cC_3^\sym$.

\subsection{Main results and techniques}
Let $C^mX$ denote the $X$ gate controlled by $m\geq 0$ bits or qubits.
It is known that $C^mX\in \mathcal{C}_{m+1}$~\cite{zcc}. 
A permutation $\pi$ on bitstrings in $\{0,1\}^n$ can be implemented by a product of $C^*X$ gates. 
However, due to the lack of group structure in $\mathcal{C}_k$ for $k\ge 3$, a product of such gates may not be in the hierarchy.
To impose more structure, Anderson~\cite{anderson} defined a product of $C^*X$ gates to be \textit{mismatch-free} if no qubit is used as both a control and target. 
Anderson then proved the following theorem.
\begin{citetheorem}[{\cite[Theorem D.3]{anderson}}]
    A mismatch-free product of $C^*X$ gates is in the Clifford hierarchy at the level of the highest-level gate in the product.
\end{citetheorem}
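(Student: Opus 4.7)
The plan is to reduce $U$ via a Clifford conjugation to a diagonal gate for which membership in $\cC_{m+1}$ is immediate from a degree count. Let $C$ and $T$ denote the (disjoint) sets of qubits used as controls and targets in the product, respectively, and let $m$ be the largest number of controls appearing in any constituent $C^*X$ gate. First I would observe that any two $C^*X$ gates in the product commute: each has its target in $T$ (where no gate places a control) and its controls in $C$ (where no gate places a target). Hence $U$ is a commuting product, and its action on the computational basis $|c,t\rangle$ (with $c \in \bF_2^{|C|}$, $t \in \bF_2^{|T|}$) is $|c,t\rangle \mapsto |c,\, t\oplus f(c)\rangle$, where each coordinate $f_i(c)$ is the $\bF_2$-polynomial whose monomials are the products of the controls of those $C^*X$ gates in $U$ targeting $t_i$. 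In particular $\deg_{\bF_2} f_i \leq m$ for every $i$.

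Next, conjugate $U$ by $H_T$, a layer of Hadamards on all target qubits. A direct calculation using $H^{\otimes |T|}|t\rangle = 2^{-|T|/2}\sum_{t'}(-1)^{t \cdot t'}|t'\rangle$ yields
\[
  H_T\, U\, H_T \;=\; \sum_{c,t} (-1)^{\sum_i f_i(c)\, t_i}\, |c,t\rangle\!\langle c,t|,
\]
i.e., $H_T U H_T$ is diagonal with phase polynomial $p(c,t) = \sum_i f_i(c)\,t_i$, which has $\bF_2$-degree at most $m+1$.

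Now I would invoke the (elementary) fact that any diagonal gate $\mathrm{diag}\bigl((-1)^{p(x)}\bigr)$ with $\deg_{\bF_2} p \leq d$ lies in $\cC_d$. This is proved by induction on $d$: for any Pauli $X^a Z^b$ one computes $D (X^a Z^b) D^\dagger = D_{q_a}\, X^a Z^b$, where $q_a(x) = p(x)+p(x+a)$ has $\bF_2$-degree at most $d-1$ because the top-degree terms cancel over $\bF_2$; the inductive hypothesis gives $D_{q_a} \in \cC_{d-1}$, and since left-multiplication by a Pauli preserves the level, $D (X^a Z^b) D^\dagger \in \cC_{d-1}$. Applying this with $d = m+1$ places $H_T U H_T$ in $\cC_{m+1}$. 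Since $H_T$ is Clifford and each level $\cC_k$ is closed under Clifford conjugation (immediate by induction from the recursive definition), we conclude $U = H_T(H_T U H_T)H_T \in \cC_{m+1}$, matching the level of the highest gate $C^m X \in \cC_{m+1}$ in the product. The only delicate step is the diagonal-gate lemma, but its inductive proof is short and self-contained; the rest is routine bookkeeping.
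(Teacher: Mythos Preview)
Your proof is correct. Note, however, that the paper does not supply its own proof of this statement: it is quoted from Anderson and used as a black box (in the ``if'' direction of \cref{sc-level}). So there is no in-paper proof to compare against directly.

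That said, your argument is the natural one and dovetails with tools the paper does develop. It effectively combines \cref{mismatch-free-is-sc} (a mismatch-free product commutes with $X$ on every target qubit and $Z$ on every non-target qubit, hence is semi-Clifford) with \cref{diagonal-in-ch} (a $\pm 1$-phase diagonal gate lies in $\cC_k$ iff its phase polynomial has degree at most $k$). Your Hadamard layer $H_T$ is precisely the Clifford that carries the maximal abelian subgroup $\langle X_i : i\in T\rangle \cup \langle Z_j : j\notin T\rangle$ to the all-$Z$ subgroup, so $H_T U H_T$ is diagonal with phase polynomial $p(c,t)=\sum_i f_i(c)t_i$ of degree $\le m+1$; then \cref{diagonal-in-ch} gives $H_T U H_T \in \cC_{m+1}$, and Clifford-invariance of levels (\cref{prop:cliff_hier}) finishes. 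The short induction you sketch for the diagonal lemma is also correct (with base case $d=1$: a linear phase polynomial yields $\pm Z^a$, which is Pauli). Minor cosmetic point: qubits that are neither controls nor targets can simply be absorbed into the $c$-register, which you implicitly do.
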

Our first main result fully characterizes all semi-Clifford permutations by establishing an exact connection with the mismatch-free property. 
\begin{result}[\cref{sc-level}]~\label{rst:semi-clifford}
    For any $k\geq 2$, a permutation gate $\pi$ is a semi-Clifford gate in $\cC_k$ if and only if $\pi$ can be written as $\phi_1\mu\phi_2$ where $\phi_1$ and $\phi_2$ are Clifford permutations and $\mu$ is a \emph{mismatch-free} product of multiply controlled $X$ gates with at most $k-1$ controls. 
    As a corollary, any semi-Clifford permutation gate on $n$ qubits is in $\cC_n$. 
\end{result}

In fact, Anderson conjectured that all gates in $\cC_3^\sym$ can be written as $\phi_1\mu\phi_2$ where $\phi_1$ and $\phi_2$ are Clifford permutations and $\mu$ is a mismatch-free product of Toffoli gates\footnote{Technically, Anderson phrased his conjecture in terms of ``a product of commuting Toffoli gates'' rather than ``a mismatch-free product of Toffoli gates.'' These are equivalent. }. 
By our first result, this would imply that all gates in $\cC_3^\sym$ are semi-Clifford. 
Our next result disproves this conjecture. 
\begin{result}
We find a $7$-qubit non--semi-Clifford permutation gate $R\in \cC_3$ (see~\cref{fig:seven_perm} for the circuit). We show that it is conjugate to the Gottesman--Mochon example by a Clifford operator. 
\end{result}

The natural question to ask next is: can we characterize all permutation gates in the hierarchy? A good staring point is to restrict to the third level $\cC_3^\sym$.
For this purpose, we define a product of Toffoli gates to be in \emph{staircase form} if each gate $\tof_{i,j,k}$ in the product has $i,j < k$ and the target qubits are in nondecreasing order in the order that the gates are applied. 
For example, the gates in~\cref{fig:eg_staircase,fig:seven_perm} are in staircase form.
Our next result captures $\cC_3^\sym$ in this staircase form.

\begin{result}[\cref{c3-staircase}]~\label{rst:staircase}
    If $\pi\in \cC_3$ is a permutation gate, then there exist Clifford permutations $\phi_1$, $\phi_2$ and a product $\mu$ of Toffoli gates in \emph{staircase form} such that $\pi = \phi_1\mu\phi_2$. 
\end{result}
We remark that there are permutations in staircase form which are not in $\cC_3^\sym$. 
As an application of our result, consider the following question: what is the smallest $n$ such that there exists a non--semi-Clifford gate in $\cC_3$ on $n$ qubits?
Zeng, Chen, and Chuang showed that all $3$-qubit unitaries in $\cC_3$ are semi-Clifford with a computer-assisted proof~\cite[Theorem 2]{zcc}. 
The Gottesman--Mochon example, equivalent to~\cref{fig:seven_perm}, is on $n = 7$.
It is unknown whether there exists any smaller example for $n=4,5,6$. 
Our last result partially resolves this question.
\begin{result}
    Using our characterization with the staircase form, we give a computer-assisted proof showing that all permutation gates on $n\leq 6$ qubits in $\cC_3$ are semi-Clifford. 
    Hence, $n=7$ is the smallest number of qubits for which a non--semi-Clifford gate exists in $\cC_3^\sym$.  
\end{result}

Conceptually, we find it natural and convenient to represent a permutation gate as a collection of polynomials over $\bF_2$. 
For example, $\cnot_{1,2}$ maps $(a_1, a_2)\mapsto (a_1, a_2+a_1)$ and $\tof_{1,2,3}$ maps $(a_1,a_2,a_3)\mapsto (a_1,a_2,a_3+a_1a_2)$.
This representation enables us to prove all of our main results with standard linear algebra arguments.
We believe that this polynomial viewpoint,
along with the mismatch-free and staircase form definitions, offers the right perspectives for studying permutation gates in the Clifford hierarchy.

\subsection{Implications}
In light of the extensive application of the Clifford hierarchy in quantum information theory, we believe our work contributes meaningful insights into an area where deeper structural understanding is much needed.
We expect our techniques and methods to be useful in further studies on the Clifford hierarchy. 

Let us also remark on a few detailed points of implications.
Our proofs of~\cref{rst:semi-clifford,rst:staircase} are constructive, which means our characterizations come with efficient algorithms to decompose such permutation gates. 
In the case of $\cC_3^\sym$, we can efficiently decompose any permutation gate into the staircase form. 
This is interesting for compilation and fault-tolerance purposes. 
As discussed earlier, $\cC_3^\sym$ is an important family of gates to consider for universal computation. 
Given an arbitrary unitary $U$ which we wish to implement, we may compile $U$ into a product of Clifford and $\cC_3^\sym$ gates, and then compile $\cC_3^\sym$ gates into staircase form Toffoli gates. 
The Toffoli gates can then be implemented with existing fault-tolerant techniques~\cite{vasmer2019three,Paetznick,Haah2018codesprotocols}. 
We remark that fault-tolerant implementation of the Toffoli gate, along with its close cousin, the CCZ gate, has been at the focus of extensive quantum error correction efforts. 
Our study of the structure of permutation gates within the hierarchy integrates well into the chapter.

A better understanding of the structure of the Clifford hierarchy is also helpful for developing new fault-tolerant techniques to implement gates in higher levels of the hierarchy~\cite{hu2021climbingdiagonalcliffordhierarchy}.
In particular, designing new codes with transversal logical operators of gates beyond $\cC_3$, such as the generalized color code~\cite{kubica}, is a problem with significant interest and broad applications.

\subsection{Organization}
The content of the paper is divided into sections as follows. 
\cref{sec-defs} gives background results and lemmas for later use. 
In particular, \cref{sec-poly} discusses our perspective of seeing permutation gates as polynomials over $\mathbb{F}_2$. 
\cref{sec-sc-perm} classifies semi-Clifford permutations and where they appear in the Clifford hierarchy. 
In~\cref{sec-c3-perm}, we define a useful form called the staircase form, in which all permutations in $\cC_3$ can be written (up to Clifford permutations). 
\cref{sec-seven-is-best} gives an example of a non--semi-Clifford permutation on seven qubits. We also detail a computer search of six-qubit permutations, assisted by the staircase form of~\cref{sec-c3-perm}, to show that $7$ is the smallest number of qubits for which there exists a non--semi-Clifford permutation in $\cC_3$.

\section{Preliminaries}\label{sec-defs}

The single-qubit Pauli gates $I_2$, $X$, $Y$, and $Z$ are given by
\begin{equation*}
    I_2 = \begin{bmatrix} 1&0 \\ 0& 1 \end{bmatrix}, \quad
    X = 
    \begin{bmatrix} 0&1 \\ 1& 0 \end{bmatrix}, \quad 
    Y = \begin{bmatrix} 0& -i \\ i& 0 \end{bmatrix}, \quad 
    Z = \begin{bmatrix} 1&0 \\ 0& -1 \end{bmatrix}. 
\end{equation*}
The \emph{Pauli group} on $n$ qubits, 
denoted as $\cP_n$, 
is the collection of all gates of the form $cP_1 \otimes P_2 \otimes \cdots \otimes P_n$ for $c \in \{\pm 1, \pm i\}$ and one-qubit gates $P_1, \cdots, P_n \in \{I_2, X, Y, Z\}$. 
In particular, we denote the set of all the $n$-qubit Pauli $X$ operators as $\cX = \{I_2, X\}^{\otimes n}$. 
Note that $\cX$ is exactly the set of all permutation gates in $\cP_n$.

We frequently use the \emph{Hadamard}, \emph{controlled} NOT, and \emph{Toffoli} gates throughout the paper: 
\begin{equation*}
    H = \frac{1}{\sqrt{2}} \begin{bmatrix} 1&1 \\ 1& -1 \end{bmatrix}, \  
    \cnot_{1,2} = \begin{bmatrix} 1&0&0&0 \\ 0&1&0&0 \\ 0&0&0&1 \\ 0&0&1&0 \end{bmatrix}, 
    \
    \tof_{1,2,3} = \begin{bmatrix} 1&0&0&0&0&0&0&0 \\ 0&1&0&0&0&0&0&0 \\ 0&0&1&0&0&0&0&0 \\ 0&0&0&1&0&0&0&0 \\ 0&0&0&0&1&0&0&0 \\ 0&0&0&0&0&1&0&0 \\ 0&0&0&0&0&0&0&1 \\ 0&0&0&0&0&0&1&0 
    \end{bmatrix}
\end{equation*}

We can view the action of CNOT as $|a_1 \ra \otimes |a_2 \ra \mapsto |a_1 \ra \otimes |a_1+a_2 \ra$ where the first qubit is the \emph{control} and the second qubits is the \emph{target}.  
Similarly, we can view the Toffoli gate as $|a_1 \ra \otimes |a_2 \ra \otimes |a_3 \ra \mapsto |a_1 \ra \otimes |a_2 \ra \otimes |a_3+a_1a_2 \ra$ where the first two qubits are controls and the third is the target.
We use subscripts to denote the qubits that a gate acts upon. 
For example, $Y_4$ is a Pauli $Y$ gate acting on the fourth qubit, and $\cnot_{3,1}$ is a CNOT gate with the third qubit as control and the first qubit as target.

The \emph{Clifford group} on $n$ qubits is the normalizer of the Pauli group in the unitary group. 
It can be generated by the Pauli group, the Hadamard and phase gate on each qubit, the CNOT gate on each ordered pair of distinct qubits, and $\{cI : |c|=1 \}$. 
Henceforth we refer to elements of $\{cI: |c|=1\}$ as \emph{phases} (not to be confused with the phase gate).

\subsection{The Clifford hierarchy, semi-Clifford, and generalized semi-Clifford}
\begin{definition}
Let $n$ be the number of qubits. 
Let $\cC_1 = \cP_n$. 
For $k \geq 2$, inductively define $\cC_k$ to be the set of all unitaries $U$ such that $UPU^{\dagger} \in \cC_{k-1}$ for all $P \in \cP_n$. 
Note that $\cC_2$ is the Clifford group. 
The set $\mathcal{CH} :=\cC_1 \cup \cC_2 \cup \cC_3 \cup \cdots$ is called the \emph{Clifford hierarchy}; we refer to $\cC_k$ as the $k$-th layer of $\mathcal{CH}$.
\end{definition}

The following propositions about the Clifford hierarchy are standard. 

\begin{proposition}\label{prop:cliff_hier}
\phantom{...}
    \begin{enumerate}
        \item For any $k$, $\cC_k$ is finite up to phase and $\cC_k\subseteq \cC_{k+1}$. 
        \item For $k\geq 2$, $\cC_k$ is closed under left and right multiplication of Clifford gates.
        \item For $k \geq 3$, $\cC_k$ is not a group.
        \item For any $k$, $\cC_k$ is closed under complex conjugation.\label{prop:closed_conj}
    \end{enumerate}
\end{proposition}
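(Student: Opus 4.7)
The plan is to prove all four parts by induction on $k$, relying only on the recursive definition $\cC_k=\{U:UPU^\dagger\in\cC_{k-1}\text{ for all }P\in\cP_n\}$ and the fact that $\cC_2$ normalizes $\cP_n$. A preliminary observation I would record first is that every global phase $cI$ lies in $\cC_k$ for all $k$, since $cI\cdot P\cdot(cI)^\dagger=P$.

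For part (1), I would argue that any unitary $U$ is determined up to a global phase by its conjugation action $P\mapsto UPU^\dagger$ on a fixed generating set of $\cP_n$ such as $\{X_i,Z_i\}_{i=1}^n$. If $U\in\cC_k$, each image lies in $\cC_{k-1}$, which is finite modulo phase by induction, so only finitely many $U$ are possible modulo phase. The nesting $\cC_k\subseteq\cC_{k+1}$ is then immediate: for $U\in\cC_k$ and $P\in\cP_n$, the inductive hypothesis gives $UPU^\dagger\in\cC_{k-1}\subseteq\cC_k$, which is exactly the defining condition for $\cC_{k+1}$. Part (4) is a parallel induction: since $\overline X=X$, $\overline Z=Z$, and $\overline Y=-Y$, we have $\overline{\cP_n}=\cP_n$, and the identity $\overline UP\overline U^\dagger=\overline{U\overline PU^\dagger}$ combined with the inductive closure $\overline{\cC_{k-1}}=\cC_{k-1}$ finishes the step.

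For part (2), I would prove by induction on $k\geq 2$ the stronger statement that $C_1UC_2\in\cC_k$ for every $U\in\cC_k$ and Cliffords $C_1,C_2$. The base case $k=2$ is the closure of the Clifford group. In the inductive step, expand
\[
(C_1UC_2)\,P\,(C_1UC_2)^\dagger=C_1\bigl(U(C_2PC_2^\dagger)U^\dagger\bigr)C_1^\dagger.
\]
Because $C_2$ normalizes $\cP_n$, the inner conjugate $C_2PC_2^\dagger$ is again Pauli, so $Q:=U(C_2PC_2^\dagger)U^\dagger\in\cC_{k-1}$; the inductive hypothesis applied to $C_1QC_1^\dagger$ at level $k-1$ then places it back in $\cC_{k-1}$, as required.

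Part (3) is the subtlest. I would argue it by appealing to the Nebe--Rains--Sloane result already cited in the footnote: $\cC_2$ is, up to phase, the largest finite subgroup of the unitary group containing the Clifford group. For $k\geq 3$, part (1) makes $\cC_k/U(1)$ a finite subset of $PU(2^n)$, part (2) guarantees closure under left and right multiplication by Cliffords, and $\cC_k$ strictly contains $\cC_2$ modulo phase (for instance, the Toffoli gate lies in $\cC_3\setminus\cC_2$, is not a phase multiple of any Clifford, and persists in every higher $\cC_k$ by the nesting of part (1)). If $\cC_k$ were closed under multiplication, $\cC_k/U(1)$ would then be a finite subgroup of $PU(2^n)$ properly containing the projective Clifford group, contradicting maximality. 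The main obstacle I anticipate is verifying cleanly that the chosen witness lies outside $\cC_2\cdot U(1)$; for the Toffoli gate this is immediate from it being a non-Clifford permutation matrix with real entries, so no phase can repair it.
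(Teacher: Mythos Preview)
The paper does not supply a proof of this proposition; it is introduced with the sentence ``The following propositions about the Clifford hierarchy are standard'' and left unproved. Your proposal therefore stands on its own, and the overall strategy for all four parts is correct and is the standard one.

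Two small technical points deserve tightening. In part~(1), the step ``each image lies in $\cC_{k-1}$, which is finite modulo phase by induction, so only finitely many $U$ are possible modulo phase'' has a gap: the injective map $[U]\mapsto (UX_iU^\dagger,UZ_iU^\dagger)_i$ lands in $\cC_{k-1}^{2n}$, not in $(\cC_{k-1}/U(1))^{2n}$, and $\cC_{k-1}$ itself is infinite. The missing observation is that $UX_iU^\dagger$ is unitarily conjugate to $X_i$ and hence has spectrum $\{\pm 1\}$, which within any phase orbit of $\cC_{k-1}$ allows at most two representatives; this is what makes the count finite. In part~(3), the Toffoli witness only exists for $n\geq 3$; to cover all $n\geq 1$ uniformly you can instead invoke the $T$ gate on one qubit (tensored with identity), which lies in $\cC_3\setminus\cC_2$ for every $n$ and is not a phase multiple of any Clifford.
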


We say that a gate is a \emph{permutation gate} if it corresponds to a $2^n \times 2^n$ permutation matrix. 
Note that this is different from only permuting the qubits. 
A gate is called \emph{diagonal} if its associated matrix is diagonal.

\begin{definition}\label{def:sc-gsc}
A gate is \emph{semi-Clifford} if it can be written as $\phi_1d\phi_2$ for some Clifford gates $\phi_1, \phi_2$ and a diagonal gate $d$.
A gate is \emph{generalized semi-Clifford} if it can be written as $\phi_1\pi d\phi_2$ for some Clifford gates $\phi_1, \phi_2$, a permutation gate $\pi$, and a diagonal gate $d$.
\end{definition}

Observe that the inverse of a semi-Clifford gate is semi-Clifford. The inverse of a generalized semi-Clifford gate is generalized semi-Clifford, as we can write $(\phi_1 \pi d\phi_2)^{-1} = \phi_2^{-1} \pi^{-1} (\pi d^{-1} \pi^{-1}) \phi_1^{-1}$, and $\pi d^{-1} \pi^{-1}$ is diagonal. If we multiply a semi-Clifford (resp. generalized semi-Clifford) element on the left or right by a Clifford gate, the resulting operator is still semi-Clifford (resp. generalized semi-Clifford).

For a maximal abelian subgroup $A$ of $\cP_n$, let $\SPAN(A)$ denote its linear span with complex coefficients.  
\begin{lemma}\label{lem:sc-with-subgps}
An operator $U$ is semi-Clifford if and only if there exist maximal abelian subgroups $A_1$ and $A_2$ of $\cP_n$ such that $UA_1U^{\dagger} = A_2$. An operator $U$ is generalized semi-Clifford if and only if there exist maximal abelian subgroups $A_1$ and $A_2$ of $\cP_n$ such that $U\SPAN(A_1)U^{\dagger} = \SPAN(A_2)$.\footnote{In most literature, semi-Clifford and generalized semi-Clifford are usually defined as in~\cref{lem:sc-with-subgps} whereas~\cref{def:sc-gsc} is proved as a proposition. }
\end{lemma}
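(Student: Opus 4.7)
The plan is to reduce both equivalences to a single canonical reference, the maximal abelian subgroup $\mathcal{Z} := \langle \pm iI, Z_1, \dots, Z_n\rangle$ of $\cP_n$, whose linear span is precisely the space of all diagonal matrices. I would invoke the standard fact that every maximal abelian subgroup of $\cP_n$ is Clifford-conjugate to $\mathcal{Z}$, so for any given $A_1, A_2$ I can pick Cliffords $\phi_1, \phi_2$ with $A_1 = \phi_2^{-1} \mathcal{Z} \phi_2$ and $A_2 = \phi_1 \mathcal{Z} \phi_1^{-1}$, and introduce $V := \phi_1^{-1} U \phi_2^{-1}$. Then $U A_1 U^\dagger = A_2$ and $U \SPAN(A_1) U^\dagger = \SPAN(A_2)$ become $V \mathcal{Z} V^\dagger = \mathcal{Z}$ and $V \SPAN(\mathcal{Z}) V^\dagger = \SPAN(\mathcal{Z})$ respectively; and by \cref{prop:cliff_hier}, $U$ is (generalized) semi-Clifford iff $V$ is, so it suffices to treat the canonical subgroup.

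For the forward directions I would write $V = d$ or $V = \pi d$ and check preservation directly: a diagonal $d$ commutes with every element of $\mathcal{Z}$, and a permutation $\pi$ merely relabels the diagonal entries of any diagonal matrix, hence fixes $\SPAN(\mathcal{Z})$ as a set. The converse for generalized semi-Clifford is also short: if $V$ conjugates the space of diagonal matrices to itself, then each rank-one diagonal projector $|j\rangle\langle j|$ is sent to another rank-one diagonal projector $|\sigma(j)\rangle\langle \sigma(j)|$ for some bijection $\sigma$, so $V|j\rangle$ is a basis vector up to a phase, giving $V = \pi D$ with $\pi$ a permutation and $D$ a diagonal unitary.

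The main obstacle is the converse in the semi-Clifford part: from $V \mathcal{Z} V^\dagger = \mathcal{Z}$ one must extract a Clifford-times-diagonal decomposition. Conjugation by $V$ induces a group automorphism of $\mathcal{Z}$ fixing the central scalars pointwise, so modulo the center it is encoded by a matrix in $GL_n(\mathbb{F}_2)$ (how the $Z_i$ mix) together with a sign $\epsilon_i \in \{\pm 1\}$ for each $Z_i$ (the sign must be $\pm 1$ rather than a fourth root of unity since $Z_i$ has order two). The key step I would isolate and verify carefully is that every such automorphism is realized by a Clifford \emph{permutation} $\psi$: the subgroup generated by $\cnot_{i,j}$ gates acts on the $Z$-generators by $Z_j \mapsto Z_i Z_j$ and $Z_k \mapsto Z_k$ for $k \ne j$, and these elementary transvections generate all of $GL_n(\mathbb{F}_2)$; the Pauli $X$ gates anticommute with $Z$ on the same qubit and supply the signs. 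Once $\psi$ is in hand, $\psi^{-1} V$ centralizes every $Z_i$, and the simultaneous eigenspaces of $\{Z_1, \dots, Z_n\}$ being one-dimensional forces $\psi^{-1}V$ to be diagonal, call it $d$; then $V = \psi d$ and $U = \phi_1 \psi d \phi_2$ is the desired semi-Clifford factorization.
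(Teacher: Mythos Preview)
The paper does not actually supply a proof of \cref{lem:sc-with-subgps}; it is stated as a standard result, with the footnote remarking that most references take this characterization as the \emph{definition} and derive \cref{def:sc-gsc} from it. So there is no paper proof to compare against.

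Your argument is correct and is exactly the standard one. A couple of minor remarks: the invariance of the (generalized) semi-Clifford property under left/right Clifford multiplication is not part of \cref{prop:cliff_hier} but is stated in the paragraph immediately following \cref{def:sc-gsc}, so you should cite that instead. Also, in the forward direction your reduction to $V=d$ (resp.\ $V=\pi d$) presupposes that you chose $\phi_1,\phi_2$ to be the Cliffords coming from the semi-Clifford decomposition of $U$, which then \emph{defines} $A_1,A_2$; you might make that choice explicit rather than phrasing it as ``for any given $A_1,A_2$''. Otherwise the key step---that any automorphism of $\mathcal{Z}$ fixing the center is realized by a product of $\cnot$ and $X$ gates via the $GL_n(\mathbb{F}_2)$-plus-signs description---is spelled out clearly and is the heart of the matter.
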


\subsection{Not all gates in \texorpdfstring{$\cC_3$}{C₃} are semi-Clifford}
Gottesman and Mochon showed that there exists a $7$-qubit unitary in $\cC_3$ that is not semi-Clifford~\cite{beigi-shor}.

\begin{proposition}\label{lem:respect-sc}
For any $k$, the inverse of any semi-Clifford element of $\cC_k$ is in $\cC_k$.
\end{proposition}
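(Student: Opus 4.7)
The plan is to exploit the semi-Clifford decomposition and reduce the problem to showing that the diagonal part's inverse stays in $\cC_k$. Write $U = \phi_1 d \phi_2$ with $\phi_1, \phi_2$ Clifford and $d$ diagonal. Then $U^{-1} = \phi_2^{-1} d^{-1} \phi_1^{-1}$, and since $\cC_k$ is closed under left and right multiplication by Cliffords (\cref{prop:cliff_hier}(2), with $k=1$ handled separately since there $U$ is Pauli and $U^{-1} = U^\dagger \in \cP_n$ directly), it suffices to prove that $d^{-1} \in \cC_k$.

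The first step is to establish that the diagonal factor itself lies in $\cC_k$. Since $U \in \cC_k$ and Clifford multiplication preserves the level, $d = \phi_1^{-1} U \phi_2^{-1} \in \cC_k$. The key observation is then the following: a unitary diagonal matrix $d$ satisfies $d^{-1} = d^{\dagger} = \overline{d}$, because diagonality forces $d^T = d$. Combining this identity with \cref{prop:cliff_hier}(\ref{prop:closed_conj}), which says $\cC_k$ is closed under complex conjugation in the computational basis, we obtain $d^{-1} = \overline{d} \in \cC_k$.

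Applying closure under Clifford multiplication once more yields $U^{-1} = \phi_2^{-1} d^{-1} \phi_1^{-1} \in \cC_k$, completing the argument. There is no real obstacle here; the only conceptual point is recognizing that the inverse of a diagonal unitary coincides with its complex conjugate, which lets us invoke \cref{prop:cliff_hier}(\ref{prop:closed_conj}) rather than needing any deeper structural fact about diagonal gates in the hierarchy (although the group structure of diagonal gates in $\cC_k$, as shown in~\cite{Cui_2017}, provides an alternate route).
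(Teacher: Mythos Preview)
Your proof is correct and follows essentially the same approach as the paper: write $U=\phi_1 d\phi_2$, deduce $d\in\cC_k$ via closure under Clifford multiplication, use $d^{-1}=\overline{d}$ together with closure under complex conjugation, and conclude $U^{-1}\in\cC_k$. Your explicit handling of the $k=1$ case is a minor addition the paper omits, but otherwise the arguments are identical.
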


\begin{proof}
For any $U \in \cC_k$ that is semi-Clifford, 
by~\cref{def:sc-gsc}, we can write $U = \phi_1 d \phi_2$ for some Clifford gates $\phi_1, \phi_2$ and a diagonal gate $d$. 
Using~\cref{prop:cliff_hier} repeatedly,  
we know that $d = \phi_1^{-1} U \phi_2^{-1} \in \cC_k$. Hence, $d^{-1} = d^{\dagger} = \overline{d} \in \cC_k$ and thus $U^{-1} = \phi_2^{-1} d^{-1} \phi_1^{-1} \in \cC_k$.
\end{proof}

\begin{lemma} \label{c3-isnt-sc}
For $n=7$, $\cC_3$ contains a non--semi-Clifford element.
\end{lemma}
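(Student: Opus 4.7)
The plan is to exhibit an explicit $7$-qubit permutation gate $R$ --- the circuit in staircase form announced earlier in the introduction --- and verify the two required properties separately: that $R \in \cC_3$, and that $R$ is not semi-Clifford.

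For membership in $\cC_3$, I would invoke Anderson's Theorem D.3 quoted in the introduction. By construction $R$ is a product of Toffoli gates in staircase form, so no qubit is simultaneously used as a control and as a target anywhere in the product; in particular the product is mismatch-free. Since each Toffoli lies in $\cC_3$, Anderson's theorem immediately gives $R \in \cC_3$.

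For the non--semi-Clifford property, rather than attacking the maximal abelian subgroup characterization of \cref{lem:sc-with-subgps} from scratch, the plan is to reduce to the Gottesman--Mochon example $G \in \cC_3$, which was shown to be non--semi-Clifford in \cite{beigi-shor}. Concretely, I would produce Clifford operators $\phi_1, \phi_2$ with $R = \phi_1 G \phi_2$. Since semi-Clifford operators are closed under left and right multiplication by Cliffords (a remark made immediately after \cref{def:sc-gsc}), $R$ would be semi-Clifford if and only if $G$ is, and the desired conclusion follows.

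The main obstacle is producing explicit $\phi_1, \phi_2$ and verifying the identity $R = \phi_1 G \phi_2$. The Gottesman--Mochon circuit is typically described with three CSWAP gates and four CCZ gates, whereas our $R$ consists of seven Toffolis. A natural strategy is to rewrite $G$ via the Clifford identities $H \cdot \ccz \cdot H = \tof$ (a Hadamard conjugation on each CCZ target) and $\cswap = \cnot \cdot \tof \cdot \cnot$, then migrate the surviving Hadamards and CNOTs past the Toffolis with standard rewrite rules until the remaining Toffolis align with those of $R$ in staircase form. The bookkeeping is elementary but tedious, and would likely be double-checked by a short computer calculation comparing the two $128 \times 128$ permutation matrices up to Clifford conjugation.
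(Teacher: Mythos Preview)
Your argument that $R \in \cC_3$ contains a genuine error: $R$ is \emph{not} mismatch-free. In the product
\[
R = \tof_{1,6,7}\tof_{2,5,7}\tof_{3,4,7}\tof_{2,3,6}\tof_{1,3,5}\tof_{1,2,4},
\]
qubit $4$ is a target in $\tof_{1,2,4}$ and a control in $\tof_{3,4,7}$; similarly for qubits $5$ and $6$. Staircase form does not imply mismatch-free --- the paper makes exactly this point with \cref{fig:eg_staircase}. So Anderson's Theorem~D.3 does not apply, and your membership argument collapses. Worse, the two halves of your plan are internally inconsistent: had $R$ been mismatch-free, \cref{mismatch-free-is-sc} would force $R$ to be semi-Clifford, contradicting what you set out to prove in the second half.

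The paper's own proof of this lemma bypasses $R$ entirely and works directly with the Gottesman--Mochon gate $G$: it verifies by computer that $G \in \cC_3$ and that $G^{-1} \notin \cC_3$, and then invokes \cref{lem:respect-sc} to conclude $G$ is not semi-Clifford. If you want to use $R$ instead, the paper later (\cref{computation-of-7-qubit-ex}) does exactly what you sketch in your second half --- it exhibits a Clifford $F$ with $FGF^{-1}=R$ --- but it uses this relation in the \emph{opposite} direction from your plan: the conjugation gives $R \in \cC_3$ \emph{because} $G \in \cC_3$. The non--semi-Clifford property of $R$ is then obtained not by reducing to $G$ but by observing that the seventh coordinate of $R$ has degree $3$, so $R^{-1} \notin \cC_3$ by \cref{ck-perm-poly}, and \cref{lem:respect-sc} finishes.
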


\begin{proof}
Let $G$ be a seven-qubit gate given by
\begin{equation*}
    G = \cswap_{7,1,6}\cswap_{7,2,5}\cswap_{7,3,4} \cdot  \ccz_{1,2,4}\ccz_{1,3,5}\ccz_{2,3,6}\ccz_{4,5,6},
\end{equation*}
where CSWAP denotes the controlled SWAP gate and CCZ denotes the controlled controlled $Z$ gate. 
See~\cref{fig:circ_gottesman_Mochon} for a circuit diagram. 

\begin{figure}[!ht]
    \centering
    \begin{equation*}
        \begin{quantikz}[slice style=blue] 
        \lstick{$1$}&&&\ctrl{4}&\ctrl{3}&&&\swap{5}&\\
        \lstick{$2$}&&\ctrl{4}&&\control{}&&\swap{3}&&\\
        \lstick{$3$}&&\control{}&\control{}&&\swap{1}&&&\\
        \lstick{$4$}&\ctrl{2}&&&\control{}&\targX{}&&&\\
        \lstick{$5$}&\control{}&&\control{}&&&\targX{}&&\\
        \lstick{$6$}&\control{}&\control{}&&&&&\targX{}&\\
        \lstick{$7$}&&&&&\ctrl{-4}&\ctrl{-5}&\ctrl{-6}&
        \end{quantikz}
    \end{equation*}
    \caption{Circuit diagram for Gottesman--Mochon seven-qubit gate $G$ (with time flowing from left to right).}
    \label{fig:circ_gottesman_Mochon}
\end{figure}
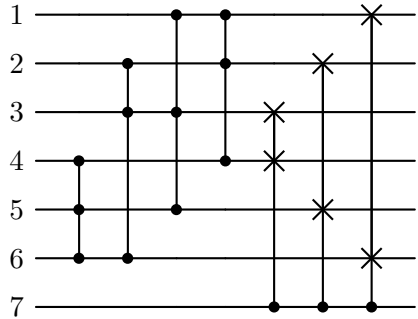

It can be verified with a computer program that $G \in \cC_3$. If $G$ were semi-Clifford, then we would have $G^{-1} \in \cC_3$ by~\cref{lem:respect-sc}. However, a computer calculation shows that $G^{-1} \notin \cC_3$ (in particular, $U^{-1}X_7U \notin \cC_2$). Thus, $G$ is not semi-Clifford.
\end{proof}

The above seven-qubit operator is the smallest known example of a non--semi-Clifford operator in $\cC_3$. 
It was shown in~\cite{zcc} that for $n=3$, all elements of $\cC_3$ are semi-Clifford. 
For $n = 4, 5, 6$, it is an open problem whether there is a $\cC_3$ operator that is non--semi-Clifford.

Beigi and Shor proved in~\cite{beigi-shor} the following theorem on $\cC_3$.

\begin{theorem} \label{c3-is-gsc}
Every element of $\cC_3$ is generalized semi-Clifford.
\end{theorem}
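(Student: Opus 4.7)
The plan is to invoke the equivalent characterization in \cref{lem:sc-with-subgps}: to show $U \in \cC_3$ is generalized semi-Clifford, it suffices to find maximal abelian subgroups $A_1, A_2 \subset \cP_n$ such that $U\SPAN(A_1)U^\dagger = \SPAN(A_2)$.

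I would exploit the $\cC_3$ condition through the map $P \mapsto C_P := UPU^\dagger$, which by definition sends $\cP_n$ into $\cC_2$. Expanding each $C_P$ in the Pauli basis gives a well-defined Pauli support $\mathrm{supp}(C_P) \subset \cP_n$ (modulo scalar phase). The objective then becomes finding a maximal abelian $A_1$ such that $\bigcup_{P \in A_1}\mathrm{supp}(C_P)$ lies inside a single maximal abelian subgroup $A_2 \leq \cP_n$; once this is achieved, $U\SPAN(A_1)U^\dagger \subseteq \SPAN(A_2)$, and a dimension count (both sides are $2^n$-dimensional over $\bC$) upgrades the inclusion to equality. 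The key structural input here is that for commuting $P, Q$, the conjugates $C_P$ and $C_Q$ also commute, which forces their Pauli supports to behave compatibly with respect to the centralizer structure of $\cP_n$.

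I would construct $A_1$ inductively by selecting pairwise commuting generators $P_1, \ldots, P_n$ one at a time, maintaining the invariant that $\bigcup_{i \leq k}\mathrm{supp}(C_{P_i})$ lies in the span of some maximal abelian $A_2^{(k)}$ and iteratively refining $A_2^{(k)}$ as each new generator is added. The main obstacle is showing this induction never gets stuck---that at each stage a further commuting $P_{k+1}$ can always be found whose image support stays compatible with the current $A_2^{(k)}$. This step requires the full strength of the $\cC_3$ hypothesis, not merely that each $C_P$ is Clifford individually, but that the entire map $P \mapsto C_P$ is a homomorphism (up to sign) from $\cP_n$ into $\cC_2$ with coherent commutation behavior; a careful analysis of how products and phases propagate through $\mathrm{supp}(\cdot)$ is needed, and one likely needs a lemma asserting that two commuting Cliffords with compatible Pauli supports can always be jointly embedded into the span of a common maximal abelian subgroup. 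A secondary subtlety is tracking scalar phases throughout, since maximal abelian subgroups of $\cP_n$ naturally contain $\{\pm 1, \pm i\} \cdot I$ and the entire argument must be phrased modulo these.
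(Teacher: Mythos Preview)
The paper does not prove this theorem at all; it merely cites it as a result of Beigi and Shor~\cite{beigi-shor}. So there is no ``paper's own proof'' to compare against.

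As for your sketch: the overall shape is compatible with how Beigi and Shor actually argue, namely via \cref{lem:sc-with-subgps} and an analysis of the map $P \mapsto C_P = UPU^\dagger$ from $\cP_n$ into $\cC_2$. However, what you have written is a strategy with explicitly acknowledged holes, not a proof. The step you flag as ``the main obstacle''---showing the inductive construction of $A_1$ never gets stuck---is in fact the entire content of the theorem, and your proposal offers no mechanism for carrying it out. Saying that one ``likely needs a lemma'' about jointly embedding commuting Cliffords into a common maximal abelian span is not a proof of that lemma, and that lemma as stated is not even true without additional hypotheses specific to the $\cC_3$ setting.

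The ingredient you are missing is the algebraic structure Beigi and Shor extract from the conjugation map. They do not build $A_1$ by a naive greedy induction on generators. Instead, they observe that for $U \in \cC_3$ the assignment $(P,Q) \mapsto C_P Q C_P^{-1} Q^{-1}$ (comparing how the Clifford $C_P$ moves the Pauli $Q$) defines, modulo phases, a biadditive form on $\cP_n/\text{center}$ with values in $\cP_n/\text{center}$, and they analyze the kernel of this form. The existence of a suitable maximal abelian $A_1$ is then deduced from a dimension argument on this kernel, not from an ad hoc inductive selection. Without isolating this bilinear structure (or something equivalent), your induction has no reason to terminate successfully: at an intermediate stage there is no guarantee that \emph{any} remaining commuting Pauli has support landing inside your current $A_2^{(k)}$, and you give no argument that $A_2^{(k)}$ can always be revised to accommodate a new generator without breaking compatibility with the earlier ones.
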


The following conjecture made in~\cite{zcc} remains open.

\begin{conjecture} \label{hierarchy-is-gsc}
Every element of the Clifford hierarchy is generalized semi-Clifford.
\end{conjecture}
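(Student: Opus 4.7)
The plan is to attempt induction on $k$, with Beigi--Shor's theorem (\cref{c3-is-gsc}) as the base case $k=3$. Given $U\in\cC_k$, for every $P\in\cP_n$ the conjugate $UPU^\dagger$ lies in $\cC_{k-1}$ and, by the inductive hypothesis, is generalized semi-Clifford. The goal is then to produce maximal abelian subgroups $A_1,A_2\subseteq\cP_n$ with $U\,\SPAN(A_1)\,U^\dagger=\SPAN(A_2)$, which by \cref{lem:sc-with-subgps} is equivalent to $U$ being generalized semi-Clifford.

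First I would re-examine the Beigi--Shor argument in detail. Their proof identifies a maximal abelian subgroup $A_1\subseteq\cP_n$ on which conjugation by $U$ acts ``nicely''; in the $k=3$ case each $UPU^\dagger$ is a specific Clifford, and they use commutation/anticommutation relations among the $P\in A_1$ to force the images into $\SPAN(A_2)$ for some $A_2$. To push this to general $k$, I would record, for each $P\in\cP_n$, the inductively-guaranteed decomposition $UPU^\dagger=\phi_1(P)\pi(P)d(P)\phi_2(P)$, and look for a large commuting family of $P$'s whose decompositions align, i.e., share a common permutation-diagonal ``core.'' Concretely, I would try to build $A_1$ greedily: start with one Pauli and, using the commutation structure of $\cP_n$, add commuting Paulis whose images continue to lie in $\SPAN$ of a fixed maximal abelian subgroup that gradually gets pinned down.

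The hard part will be that the generalized semi-Clifford form of $UPU^\dagger$ is far from unique, and the permutation factor $\pi(P)$ can interact with the diagonal and Clifford factors in ways that vary wildly as $P$ ranges over $\cP_n$. In the Beigi--Shor setting this issue does not arise because $\cC_2$ is a group with transparent action on $\cP_n$; at higher levels there is no such group structure, and even assembling a consistent ``image maximal abelian subgroup'' $A_2$ from the images $UPU^\dagger$ looks delicate. A secondary obstacle is that \cref{prop:closed_conj} plus~\cref{lem:respect-sc} only gives closure of the $\emph{semi-Clifford}$ part of $\cC_k$ under inverses, and the analogous statement for generalized semi-Clifford closure of $\cC_k$ is not at our disposal; an inductive argument may therefore need to be bootstrapped carefully so as not to implicitly assume what is being proved.

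As an intermediate target more amenable to our techniques, I would first try to establish the conjecture restricted to permutation gates, i.e., show that every $U\in\cC_k^{\sym}$ is generalized semi-Clifford. Here the polynomial-over-$\bF_2$ viewpoint of \cref{sec-poly} and the staircase decomposition of \cref{rst:staircase} offer real leverage: one could attempt an inductive staircase-style normal form at higher levels, in which outer Clifford permutations absorb ``mismatch'' structure and an inner mismatch-free product of $C^*X$ gates plays the role of $\pi d$. A proof in the permutation case would not only be a meaningful partial result toward~\cref{hierarchy-is-gsc}, but would also suggest, by analogy, what the correct choice of $A_1$ and $A_2$ should be in the general unitary setting.
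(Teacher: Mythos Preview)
The statement you are addressing is \cref{hierarchy-is-gsc}, which the paper explicitly records as an \emph{open conjecture}: ``The following conjecture made in~\cite{zcc} remains open.'' There is no proof in the paper to compare against, and your proposal is not a proof either---it is a research outline that candidly flags the obstacles without overcoming them.

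Concretely, your own ``hard part'' paragraph identifies the genuine gap: the inductive step requires assembling, from the individually generalized-semi-Clifford operators $UPU^{\dagger}$, a \emph{single} maximal abelian subgroup $A_1$ whose image under conjugation by $U$ lands in $\SPAN(A_2)$ for some fixed $A_2$. Knowing that each $UPU^{\dagger}$ is separately generalized semi-Clifford gives no control over how the decompositions $\phi_1(P)\pi(P)d(P)\phi_2(P)$ cohere as $P$ varies, and nothing in the proposal explains how to force such coherence. The Beigi--Shor argument for $k=3$ succeeds precisely because $\cC_2$ is a group acting explicitly on $\cP_n$; you correctly note this structure disappears for $k\geq 4$, but you do not supply a substitute. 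The greedy construction you sketch (``add commuting Paulis whose images continue to lie in $\SPAN$ of a fixed maximal abelian subgroup'') has no mechanism guaranteeing it reaches a \emph{maximal} abelian subgroup rather than stalling early.

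Your fallback---proving the conjecture for $\cC_k^{\sym}$ first---is a reasonable partial target, but note that the staircase form of \cref{c3-staircase} is established only for $k=3$ and is not an if-and-only-if characterization even there; extending it to higher $k$ is itself nontrivial and not carried out in the paper. In short: the paper offers no proof, and your proposal, while a sensible plan of attack, does not close the gap it itself names.
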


\begin{remark}
    Recall that a generalized semi-Clifford gate can be written as $\phi_1 \pi d\phi_2$ where $\phi_1$ and $\phi_2$ are Cliffords, $\pi$ is a permutation, and $d$ is diagonal. 
    A similar form of $\pi d \phi$ is considered in the context of approximate unitary designs or pseudorandom unitaries in~\cite{metger2024simple,chen2024incompressibility}, where $\phi$ and $\pi$ are sampled uniformly at random from their respective groups, and $d$ is a diagonal gate with random $\pm 1$ entries. 
\end{remark}

\subsection{The polynomial viewpoint} \label{sec-poly}
Lemmas in this section are not hard to prove. 
Nevertheless, they provide a crucial perspective for studying permutation gates in the Clifford hierarchy.

\begin{lemma}\label{poly-rep}
Any function from $\mathbb{F}_2^n$ to $\mathbb{F}_2$ can be uniquely written as an $n$-variable polynomial that has degree at most $1$ in each variable.
\end{lemma}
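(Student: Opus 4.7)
The plan is a dimension-count argument. Both the set of functions $\bF_2^n \to \bF_2$ and the set of polynomials in $\bF_2[x_1, \ldots, x_n]$ with degree at most $1$ in each variable (call these \emph{multilinear}) are naturally $\bF_2$-vector spaces under pointwise and coefficient-wise addition, respectively. The multilinear polynomials have the monomials $\prod_{i \in S} x_i$ indexed by subsets $S \subseteq \{1, \ldots, n\}$ as a basis, so their space has dimension $2^n$. The space of functions $\bF_2^n \to \bF_2$ also has dimension $2^n$, with point-indicator functions as a basis. The evaluation map sending a polynomial to the function it represents is $\bF_2$-linear, so to conclude bijectivity (and hence both existence and uniqueness) it suffices to establish either injectivity or surjectivity.

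I would prove surjectivity by exhibiting explicit preimages for the indicator basis of the target. For each $a \in \bF_2^n$, define
$$\chi_a(x_1, \ldots, x_n) = \prod_{i=1}^{n} (x_i + a_i + 1) \in \bF_2[x_1, \ldots, x_n].$$
This is multilinear by construction. For any $b \in \bF_2^n$, the $i$-th factor $b_i + a_i + 1$ evaluates to $1$ exactly when $b_i = a_i$ and to $0$ otherwise. Hence $\chi_a(b) = 1$ when $b = a$ and $\chi_a(b) = 0$ otherwise, so $\chi_a$ maps to the indicator of $a$. Since the indicators span the function space, the evaluation map is surjective, and by equality of dimensions it is bijective.

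There is no real obstacle here; the argument is essentially linear algebra once the two spaces are set up. The only point requiring mild attention is that the $2^n$ monomials $\prod_{i \in S} x_i$ truly are linearly independent in the ring of multilinear polynomials, which is immediate from the fact that distinct subsets $S$ yield distinct monomials with disjoint support in the free $\bF_2$-module structure. An alternative route, if one prefers to avoid the explicit $\chi_a$ formula, would be to prove injectivity directly by induction on $n$: write a hypothetical multilinear $p$ in its zero class as $x_n q(x_1,\ldots,x_{n-1}) + r(x_1,\ldots,x_{n-1})$, and deduce $r \equiv 0$ and $q \equiv 0$ by evaluating at $x_n = 0$ and $x_n = 1$ and applying the inductive hypothesis.
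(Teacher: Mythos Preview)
Your argument is correct. The paper does not actually supply a proof of this lemma; it is stated without proof in the preliminaries (the section opens with ``Lemmas in this section are not hard to prove''), treating it as a standard fact. Your dimension-counting argument with explicit indicator polynomials $\chi_a$ is a clean and complete way to establish it, and the alternative inductive injectivity route you sketch is also valid.
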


\begin{theorem}\label{diagonal-in-ch}
For any function $f: \mathbb{F}_2^n \rightarrow \mathbb{F}_2$, the diagonal gate $\sum_j (-1)^{f(j)} |j \ra \la j|$ is in $\cC_k$ if and only if $f$, considered as a polynomial, has degree at most $k$.
\end{theorem}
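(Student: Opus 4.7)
The plan is to induct on $k$ using the recursive definition of the Clifford hierarchy. For the base case $k=1$, note that $D_f := \sum_j (-1)^{f(j)} |j\ra\la j|$ is diagonal with $\pm 1$ entries, so any Pauli decomposition $cP_1 \otimes \cdots \otimes P_n$ of it must have $P_i \in \{I_2, Z\}$ (to remain diagonal) and $c \in \{\pm 1\}$ (to avoid imaginary entries). Matching diagonals forces $f(x) = \sum_i v_i x_i + b$ for some $v \in \mathbb{F}_2^n, b \in \mathbb{F}_2$, i.e., $\deg(f) \le 1$; the converse is immediate.

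For the inductive step ($k \ge 2$), $D_f$ commutes with every diagonal Pauli, so $D_f \in \cC_k$ iff $D_f M D_f^\dagger \in \cC_{k-1}$ for every $M \in \cX$. Writing $M = X^u := \bigotimes_i X^{u_i}$ with $u \in \{0,1\}^n$, a direct computation (using $D_f^\dagger = D_f$) yields
\[
D_f X^u D_f^\dagger = D_{g_u} X^u, \qquad g_u(x) := f(x) + f(x \oplus u) \pmod{2},
\]
after verifying that $g_u$ is invariant under $x \mapsto x \oplus u$ so the formula is well-defined on basis states. Since $X^u$ is Clifford (in fact Pauli) and $\cC_{k-1}$ is closed under Clifford multiplication when $k-1 \ge 2$ (and $\cP_n$ is a group when $k-1=1$), this is equivalent to $D_{g_u} \in \cC_{k-1}$ for every $u$, which by the inductive hypothesis is $\deg(g_u) \le k-1$ for every $u$.

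It then remains to establish the polynomial claim: $\deg(f) \le k$ iff $\deg(g_u) \le k-1$ for every $u$. Writing $f = \sum_S c_S \prod_{i \in S} x_i$ as in~\cref{poly-rep}, substituting $x \mapsto x \oplus u$ into each monomial and adding $f(x)$ back cancels the original top-degree term and leaves only monomials of strictly smaller degree (and vanishes entirely when $S \cap \mathrm{supp}(u) = \emptyset$); hence $\deg(g_u) \le \deg(f) - 1 \le k-1$, giving the forward direction. For the converse, if $\deg(f) > k$, choose a top-degree monomial $\prod_{i \in S} x_i$ of $f$ and any $j \in S$; then only this monomial of $f$ contributes to the coefficient of $\prod_{i \in S \setminus \{j\}} x_i$ in $g_{e_j}$, making it nonzero of degree $|S|-1 > k-1$ and contradicting $\deg(g_{e_j}) \le k-1$.

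The main subtleties will be the conjugation identity $D_f X^u D_f^\dagger = D_{g_u} X^u$ (whose coherence hinges on the $\oplus u$-invariance of $g_u$) and the converse half of the polynomial claim, where one must argue that the contribution of the top-degree monomial to $g_{e_j}$ survives and cannot be cancelled by contributions from other monomials of $f$. Both are routine $\mathbb{F}_2$ computations once organized properly, and the overall structure reduces to a clean induction on $k$ paralleling the recursive definition of $\cC_k$.
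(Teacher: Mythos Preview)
Your proposal is correct. The base case, the conjugation identity $D_f X^u D_f = D_{g_u} X^u$ with $g_u(x)=f(x)+f(x\oplus u)$, the reduction to $X$-type Paulis via closure of $\cC_{k-1}$ under Pauli multiplication, and the finite-difference argument on multilinear polynomials over $\mathbb{F}_2$ all go through as you describe. In particular, for the converse half of the polynomial claim, your observation that the unique monomial of $f$ contributing to the coefficient of $\prod_{i\in S\setminus\{j\}}x_i$ in $g_{e_j}$ is $\prod_{i\in S}x_i$ is exactly right (the discrete derivative along $e_j$ sends each monomial containing $x_j$ bijectively to the same monomial with $x_j$ removed), so there is no cancellation to worry about.

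By contrast, the paper does not prove this statement at all: it simply cites it as a special case of the classification of diagonal gates in the Clifford hierarchy due to Cui, Gottesman, and Krishna. Your argument is therefore genuinely different in that it is elementary and self-contained, relying only on the recursive definition of $\cC_k$ and basic facts about multilinear polynomials over $\mathbb{F}_2$. The citation buys generality (the Cui--Gottesman--Krishna theorem handles diagonal gates with arbitrary roots of unity, not just $\pm 1$ phases), whereas your approach buys transparency: it makes explicit the mechanism by which one level of the hierarchy corresponds to one degree in the polynomial, via the identity $\deg(f(x)+f(x\oplus u)) \le \deg f - 1$ with equality attainable for $u=e_j$.
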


\begin{proof}
This is a special case of the main theorem in~\cite[See Eq. (1)]{Cui_2017}. 
\end{proof}

\begin{definition}[Polynomial representation]~\label{def:polynomial_rep}
    Given a permutation gate $\pi: \mathbb{F}_2^n\rightarrow \mathbb{F}_2^n$, let $\pi_i: \mathbb{F}_2^n \rightarrow \mathbb{F}_2$ denote the function $\pi$ restricted to the  $i$-th output bit. 
    Denote the input bits $a_1, \cdots, a_n$;
    from~\cref{poly-rep} we know that each $\pi_i$ can be written as a polynomial. 
    We write $(\pi_1, \cdots, \pi_n)$ as the polynomial representation of $\pi$. 
    For example, $\tof_{1,2,3}$ can be represented as $(a_1,a_2,a_3) \mapsto (a_1,a_2,a_3+a_1a_2)$.
\end{definition}

Denote by $e_1, \cdots, e_n$ the standard basis of $\mathbb{F}_2^n$.

\begin{lemma} \label{ck-perm-poly}
For any positive integer $k$, if $\pi \in \cC_{k+1}$ is a permutation gate, then each coordinate of $\pi^{-1}$ has degree at most $k$.
\end{lemma}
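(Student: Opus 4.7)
The plan is to read off each coordinate of $\pi^{-1}$ by conjugating a Pauli $Z_i$ through $\pi$, and then invoke \cref{diagonal-in-ch}. First I would observe that a permutation gate is a real orthogonal matrix, so $\pi^{\dagger} = \pi^{-1}$. Since $\pi \in \cC_{k+1}$, the definition of the Clifford hierarchy gives $\pi Z_i \pi^{-1} \in \cC_k$ for every $i \in \{1,\dots,n\}$.

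Next I would compute the conjugation explicitly. On a basis state $|m\ra$, applying $\pi^{-1}$ gives $|\pi^{-1}(m)\ra$; then $Z_i$ multiplies by $(-1)^{(\pi^{-1}(m))_i}$; and finally $\pi$ returns to $|m\ra$. Hence
\[
\pi Z_i \pi^{-1} \;=\; \sum_{m \in \mathbb{F}_2^n} (-1)^{(\pi^{-1})_i(m)} \, |m\ra \la m|,
\]
where $(\pi^{-1})_i$ is the $i$-th coordinate function of $\pi^{-1}$ in the sense of \cref{def:polynomial_rep}. Applying \cref{diagonal-in-ch} in the ``only if'' direction to this diagonal gate, which lies in $\cC_k$, shows that $(\pi^{-1})_i$, viewed as a polynomial over $\mathbb{F}_2$, has degree at most $k$.

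Since $i$ was arbitrary, every coordinate of $\pi^{-1}$ has degree at most $k$, giving the lemma. There is no real obstacle here: once the conjugate $\pi Z_i \pi^{-1}$ is identified with the diagonal gate whose sign exponent is precisely the $i$-th coordinate of $\pi^{-1}$, the statement is an immediate consequence of the polynomial classification of diagonal gates in the Clifford hierarchy. The only thing worth being careful about is that one conjugates with $\pi$ on the left and $\pi^{-1}$ on the right (not the other way around), so that the coordinates that appear are those of $\pi^{-1}$ rather than of $\pi$ itself.
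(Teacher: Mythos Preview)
Your proof is correct and follows essentially the same argument as the paper: compute $\pi Z_i \pi^{-1}$ as the diagonal gate $\sum_m (-1)^{(\pi^{-1})_i(m)}\,|m\ra\la m|$ and then invoke \cref{diagonal-in-ch}. The paper's version is just a one-line compression of exactly this computation.
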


\begin{proof}
For each $i\in [n]$, we have $\cC_k \ni \pi Z_i \pi^{-1} = \sum_j (-1)^{e_i^\top j} \ket{\pi(j)}\!\bra{\pi(j)} = \sum_j (-1)^{e_i^\top \pi^{-1}(j)} \ket{j}\!\bra{j}$. 
The claim follows from~\cref{diagonal-in-ch}.
\end{proof}

\begin{remark} \label{non-quad-perm}
For $\pi \in \cC_3$, this lemma tells us that every coordinate of $\pi^{-1}$ has degree at most $2$; however, as we will see in~\cref{r-isnt-quad}, the coordinates of $\pi$ themselves do not necessarily have degree at most $2$.
\end{remark}

\begin{lemma} \label{clifford-perm}
For any $n \times n$ invertible matrix $M$ over $\mathbb{F}_2$ and any vector $w$, the permutation gate sending $|v \ra \mapsto |Mv + w\ra$ is Clifford. Conversely, any Clifford permutation is of this form for some $M$ and $w$.
\end{lemma}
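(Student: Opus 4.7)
The plan is to handle the two directions separately, since they require quite different techniques. For the forward direction, I would first observe that the translation $|v\rangle \mapsto |v+w\rangle$ is simply the Pauli operator $X^{w_1} \otimes \cdots \otimes X^{w_n}$, hence Clifford. So it suffices to show that the linear permutation $|v\rangle \mapsto |Mv\rangle$ is Clifford for every invertible $M$ over $\mathbb{F}_2$. For this, I would recall that $GL_n(\mathbb{F}_2)$ is generated by the elementary transvections $T_{ij} := I + e_j e_i^\top$ for $i \neq j$ (swaps of coordinates can themselves be produced from transvections, just as $\mathrm{SWAP}$ decomposes into three CNOTs). The linear permutation corresponding to $T_{ij}$ takes $|v\rangle \mapsto |v + v_i e_j\rangle$, which is precisely $\cnot_{i,j}$, a Clifford gate. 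Since $\cC_2$ is a group, any product of CNOTs is Clifford, and so $|v\rangle \mapsto |Mv\rangle$ is Clifford for every $M \in GL_n(\mathbb{F}_2)$.

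For the converse, suppose $\pi$ is a Clifford permutation. For each $i\in[n]$, consider $\pi X_i \pi^\dagger$, which lies in $\cP_n$ by definition of the Clifford group. The key observation is that $\pi$ and $X_i$ are $0/1$ permutation matrices, so the conjugate $\pi X_i \pi^\dagger$ is again a $0/1$ permutation matrix. Among elements of $\cP_n$, the only ones with this property are those in $\cX = \{I_2, X\}^{\otimes n}$ with phase $+1$. Therefore there exists $a_i \in \mathbb{F}_2^n$ with $\pi X_i \pi^\dagger = X^{a_i} := X_1^{a_{i,1}} \otimes \cdots \otimes X_n^{a_{i,n}}$, which unpacks to the relation
\begin{equation*}
\pi(u + e_i) = \pi(u) + a_i \quad \text{for all } u \in \mathbb{F}_2^n.
\end{equation*}

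Setting $u = 0$ gives $a_i = \pi(e_i) + \pi(0)$. Iterating this additivity relation on each coordinate (by induction on Hamming weight) yields
\begin{equation*}
\pi(v) = \pi(0) + \sum_{i=1}^n v_i a_i = Mv + w,
\end{equation*}
where $M$ is the matrix with columns $a_1, \ldots, a_n$ and $w := \pi(0)$. Since $\pi$ is a bijection, the affine map $v \mapsto Mv + w$ must also be a bijection, so $M \in GL_n(\mathbb{F}_2)$.

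I don't foresee a genuine obstacle; both directions are short. The only mildly delicate step is arguing in the converse direction that $\pi X_i \pi^\dagger$ lands in $\cX$ with phase exactly $+1$ (rather than with some Pauli $Y$ or $Z$ factor, or a nontrivial phase), but this follows immediately from the $0/1$-entry observation. Everything else is a direct conversion between the Pauli-conjugation picture and the affine-map picture.
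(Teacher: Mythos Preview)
Your proof is correct in both directions, but it differs from the paper's in interesting ways.

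For the forward direction, the paper verifies directly that $\pi X_i \pi^{-1}$ and $\pi Z_i \pi^{-1}$ are Pauli by computing the conjugation explicitly on basis states. Your argument instead factors the gate as a translation times a linear permutation, then decomposes the latter via the generating set of transvections for $GL_n(\mathbb{F}_2)$, identifying each transvection with a $\cnot$. Both are short; the paper's direct check avoids invoking the generation of $GL_n(\mathbb{F}_2)$, while yours is more constructive in that it actually exhibits a Clifford circuit for the gate.

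For the converse, the paper applies its earlier \cref{ck-perm-poly} (which goes through $\pi Z_i \pi^{-1}$ and the classification of diagonal $\cC_k$ gates) to conclude that each coordinate of $\pi$ has degree at most $1$. You instead conjugate $X_i$ rather than $Z_i$, and use the elementary observation that a Pauli operator which is also a $0/1$ permutation matrix must lie in $\cX$ with phase $+1$; from the resulting relation $\pi(u+e_i)=\pi(u)+a_i$ you read off the affine form directly. Your route is self-contained and bypasses the diagonal-gate machinery, whereas the paper's route reuses lemmas already set up for later sections.
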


\begin{proof} 
For the first part, it is clear that $|v \ra \mapsto |Mv+w \ra$ is a permutation. 
Call it $\pi$. 
To show that $\pi$ is Clifford, it suffices to show that $\pi X_i \pi^{-1}, \pi Z_i \pi^{-1} \in \cP_n$. 
For $\pi X_i \pi^{-1}$, it sends \[|v \ra \mapsto |M^{-1}(v-w) \ra \mapsto |M^{-1}(v-w)+e_i \ra \mapsto |M(M^{-1}(v-w)+e_i) +w \ra = |v+Me_i \ra, \] which is a product of $X$ gates. For $\pi Z_i \pi^{-1}$, it sends \[|v \ra \mapsto |M^{-1}(v-w) \ra \mapsto (-1)^{e_i^\top M^{-1}(v-w)} |M^{-1}(v-w) \ra \mapsto (-1)^{e_i^\top M^{-1}(v-w)} |v \ra. \] We can rewrite this as $|v \ra \mapsto (-1)^{-e_i^\top M^{-1}w}(-1)^{((M^{-1})^\top e_i)^\top v} |v \ra$, so this is a product of $Z$ operators up to a phase of $\pm 1$. 
Hence, we have $\pi \in \cC_2$.

For the second part, we have $\pi^{-1}$ is a permutation in $\cC_2$ (as $\cC_2$ is a group). 
Using~\cref{ck-perm-poly} with $k=1$, every coordinate of $(\pi^{-1})^{-1} = \pi$ has degree at most $1$.
This directly yields 
a matrix $M$ and vector $w$ so that $\pi$ can be written as $|v \ra \mapsto |Mv+w \ra$. 
Since $\pi$ is a permutation, $M$ must be invertible. 
\end{proof}

\subsection{Abelian subgroups of \texorpdfstring{$\cP_n$}{Pₙ}}
Suppose an $n$-qubit unitary $U\in \cC_k$ is not semi-Clifford. 
It is trivial to see the $(n+1)$-qubit unitary $U\otimes I_2$ is in $\cC_k$, but it is not completely trivial to conclude that $U\otimes I_2$ is also not semi-Clifford. 
It is sometimes glossed over in the literature (\emph{e.g.}, the proof of~\cite[Theorem 3]{zcc}). 
We here provide a more careful treatment of this fact. 
Some lemmas will be used again in later sections. 

\begin{lemma} \label{structure-of-max-ab-subgp}
Any maximal abelian subgroup of $\cP_n$ is isomorphic to $(\mathbb{Z}/2\mathbb{Z})^n$ up to phase.
\end{lemma}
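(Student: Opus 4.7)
The plan is to reduce the statement to a routine fact in symplectic linear algebra over $\mathbb{F}_2$. First I would form the quotient $\overline{\cP_n} := \cP_n / Z$, where $Z = \{\pm 1, \pm i\}\cdot I$ is the center. Since any two Pauli tensors either commute or anticommute, every commutator in $\cP_n$ lies in $\{\pm 1\}\cdot I \subseteq Z$, so $\overline{\cP_n}$ is abelian; and since $(P_1\otimes \cdots \otimes P_n)^2 \in Z$ for every tensor of single-qubit Paulis, every element of $\overline{\cP_n}$ has order dividing $2$. A count of representatives gives $|\overline{\cP_n}| = 4^n$, so $\overline{\cP_n}$ is an $\mathbb{F}_2$-vector space of dimension $2n$.

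Next I would define a symplectic form $\omega : \overline{\cP_n} \times \overline{\cP_n} \to \mathbb{F}_2$ by declaring $\omega(\overline P,\overline Q)=0$ if $PQ = QP$ and $\omega(\overline P,\overline Q)=1$ if $PQ = -QP$. Because the sign is central and independent of the chosen representatives, $\omega$ is well defined and bilinear. Non-degeneracy follows from the fact that any Pauli which commutes with every $X_i$ and every $Z_i$ must lie in $Z$, and the images of $X_1,Z_1,\ldots,X_n,Z_n$ span $\overline{\cP_n}$.

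Now let $A \leq \cP_n$ be a maximal abelian subgroup. I would first note that $Z \subseteq A$, because any central element can be adjoined to any abelian subgroup while preserving commutativity. Thus $A$ is exactly the preimage of its image $\overline A \leq \overline{\cP_n}$. The condition that $A$ is abelian translates to $\omega \equiv 0$ on $\overline A$, i.e.\ $\overline A$ is isotropic; maximality of $A$ translates to maximality of $\overline A$ among isotropic subspaces, since the preimage in $\cP_n$ of any strictly larger isotropic subspace would be a strictly larger abelian subgroup.

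Finally, I would invoke the standard linear-algebra fact that a maximal isotropic (Lagrangian) subspace of a non-degenerate symplectic space of dimension $2n$ over any field has dimension exactly $n$. Hence $\overline A \cong \mathbb{F}_2^n = (\mathbb{Z}/2\mathbb{Z})^n$, which gives the claim modulo phases. There is no real obstacle here: the only thing to verify carefully is that the cocycle used to define $\omega$ descends to a bilinear non-degenerate form on the quotient, after which the conclusion is immediate from symplectic linear algebra.
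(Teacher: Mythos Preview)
Your argument is correct and is exactly the standard symplectic-linear-algebra proof of this well-known fact. Note, however, that the paper does not actually supply a proof of this lemma; it is stated without proof, presumably because it is regarded as standard. So there is nothing to compare against: your write-up simply fills in the omitted justification, and it does so cleanly. The one step worth tightening slightly is the claim that maximality of $A$ in $\cP_n$ is equivalent to maximality of $\overline A$ among isotropic subspaces; this uses that any isotropic subspace $W \supsetneq \overline A$ pulls back to an abelian subgroup of $\cP_n$ strictly containing $A$, which in turn relies on the (true but not entirely trivial) fact that the preimage of an isotropic subspace is abelian, not merely that its commutators lie in $Z$. You have the ingredients for this (since $\omega$ records the commutator sign exactly), so there is no gap, just a sentence you might make explicit.
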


\begin{lemma} \label{new-max-ab-subgp}
Let $A$ be a maximal abelian subgroup of $\cP_n$, and let $B$ be a (not necessarily maximal) abelian subgroup of $\cP_n$. Then there exists a maximal abelian subgroup $A'$ of $\cP_n$ such that $B \subseteq A' \subseteq \la A, B \ra$.
\end{lemma}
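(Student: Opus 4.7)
The plan is to construct $A'$ by hand as $A'' \cdot B$, where $A'' \leq A$ is the subgroup of $A$ consisting of elements that commute with every element of $B$:
\[
A'' := \{a \in A : ab = ba \text{ for all } b \in B\}.
\]
Since $A$ is abelian, so is $A''$, and because $A''$ commutes elementwise with $B$ by construction, the product $A' := A'' \cdot B$ is a subgroup of $\cP_n$ and is itself abelian. Obviously $B \subseteq A' \subseteq \la A,B\ra$, so the real content is to check that $A'$ is \emph{maximal} abelian.

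For that, I would pass to the quotient $\bar{\cP}_n := \cP_n/\{\pm 1, \pm i\} \cong \mathbb{F}_2^{2n}$, which carries the non-degenerate symplectic form $\omega$ defined by $\omega(\bar p, \bar q) = 0$ iff $p$ and $q$ commute. A subgroup $H\leq \cP_n$ containing the center is maximal abelian exactly when $\bar H$ is Lagrangian (an $\omega$-isotropic subspace of dimension $n$): the ``only if'' follows from \cref{structure-of-max-ab-subgp}, while the ``if'' follows because any $g \in \cP_n$ commuting with $H$ has $\bar g \in \bar H^\perp = \bar H$ and thus lies in $H$. Since $A \supseteq \{\pm 1, \pm i\}$, we get $A'' \supseteq \{\pm 1, \pm i\}$ and hence $A' \supseteq \{\pm 1, \pm i\}$, so it suffices to show that $\bar{A'} = \bar{A''} + \bar B$ has dimension $n$.

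The heart of the proof is then a symplectic dimension count. Because $\bar A$ is Lagrangian, $\bar A^\perp = \bar A$, giving
\[
(\bar A \cap \bar B^\perp)^\perp \;=\; \bar A^\perp + \bar B \;=\; \bar A + \bar B,
\]
so $\dim \bar{A''} = 2n - \dim(\bar A + \bar B) = n - \dim \bar B + \dim(\bar A \cap \bar B)$. Since $\bar B$ is isotropic, $\bar B \subseteq \bar B^\perp$, and so $\bar{A''} \cap \bar B = \bar A \cap \bar B$. Inclusion-exclusion then yields $\dim(\bar{A''} + \bar B) = n$, as required.

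The main subtlety, such as it is, lies in the interplay between $\cP_n$ and its central quotient: one needs $\omega$ to be well-defined on classes (it is, since commutators in $\cP_n$ are central) and one needs $A'$ to automatically contain the center so that a Lagrangian $\bar{A'}$ really does correspond to a maximal abelian $A'$. Both points are handled by the concrete definition $A' = A''\cdot B$, so I do not anticipate any serious obstacle beyond carrying out the dimension computation above.
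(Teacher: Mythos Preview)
Your proof is correct, and it takes a genuinely different route from the paper's. The paper builds $A'$ iteratively: for a single generator $b$ it lists the generators $a_1,\dots,a_k$ of $A$ that anticommute with $b$ and replaces them by $b,\,a_1a_2,\,a_2a_3,\dots,a_{k-1}a_k$, then repeats this step for each generator of $B$, noting that the update preserves everything already commuting with the new generator. Your argument instead identifies the answer in one stroke as $A' = A''\cdot B$ with $A''=C_A(B)$, and verifies maximality by a symplectic dimension count on $\bar\cP_n\cong\mathbb{F}_2^{2n}$: $\bar{A''}=\bar A\cap\bar B^\perp$ has dimension $n-\dim\bar B+\dim(\bar A\cap\bar B)$ because $\bar A$ is Lagrangian, and then $\dim(\bar{A''}+\bar B)=n$ since $\bar{A''}\cap\bar B=\bar A\cap\bar B$. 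It is worth observing that the paper's inductive construction in fact converges to the very same subgroup $C_A(B)\cdot B$; your approach simply names this object up front and checks its size, trading the paper's hands-on generator manipulation for a cleaner linear-algebra computation. Either is fine here, though the symplectic viewpoint makes the role of the Lagrangian condition on $A$ more transparent.
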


\begin{proof}
We first consider the case where $B$ is generated by a single operator $b$ (up to phase). 
Let $\{a_1, \cdots, a_n\}$ be the generators of $A$ up to phase. 
Without loss of generality, suppose $a_1, \cdots, a_k$ are the generators of $A$ which anti-commute with $b$. Consider sequential pairwise products of the form $a_1a_2, a_2a_3, a_3a_4, \cdots, a_{k-1}a_k$, and let $A'$ be the group generated by $\{b, a_1a_2, \cdots, a_{k-1}a_k, a_{k+1}, \cdots, a_n\}$ (up to $\{\pm 1, \pm i\}$ phase). We see that $A'$ is a maximal abelian subgroup of $\cP_n$ and $B\subseteq A'\subseteq \langle A, B\rangle$.

In the case where $B$ is generated by operators $b_1, \cdots, b_k$, we iteratively update $A'$ for every generator of $B$ with the above procedure. Note that the update procedure can be seen to preserve all elements of $A$ that commute with $b$. Thus, at every update, we keep all generators of $B$ that were already added (as $B$ is abelian); thus we obtain the desired subgroup.
\end{proof}

\begin{lemma} \label{sc-lowering}
Suppose $U$ is an $n$-qubit non--semi-Clifford gate. Then $U\otimes I_{2^m}$ is also not semi-Clifford on $n+m$ qubits for any positive integer $m$. 
\end{lemma}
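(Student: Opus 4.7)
The plan is to use the maximal abelian subgroup characterization of semi-Clifford gates from \cref{lem:sc-with-subgps} and transfer structure from $\cP_{n+m}$ down to $\cP_n$. Suppose for contradiction that $V := U \otimes I_{2^m}$ is semi-Clifford. Then there exist maximal abelian subgroups $A_1, A_2 \subseteq \cP_{n+m}$ with $V A_1 V^\dagger = A_2$. The crucial observation is that $V$ commutes with every element of the abelian subgroup $B := \langle Z_{n+1}, \ldots, Z_{n+m}\rangle$, so $VBV^\dagger = B$.

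First I would arrange for both $A_1$ and $A_2$ to contain $B$. Apply \cref{new-max-ab-subgp} with $A = A_1$ and the abelian subgroup $B$ to produce a maximal abelian subgroup $A_1' \subseteq \langle A_1, B\rangle$ with $B \subseteq A_1'$. Then $V A_1' V^\dagger \subseteq \langle VA_1V^\dagger, VBV^\dagger\rangle = \langle A_2, B\rangle \subseteq \cP_{n+m}$, and this image is abelian of the same order as $A_1'$, hence maximal abelian in $\cP_{n+m}$; it also contains $B$ because $V$ fixes $B$. So I can replace $A_1, A_2$ by $A_1', V A_1' V^\dagger$ and henceforth assume $B \subseteq A_1 \cap A_2$.

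Next, since every element of $A_i$ commutes with $B$, its tensor factor on each of the last $m$ qubits must be $I$ or $Z$ (up to phase); so every element of $A_i$ has the form $p \otimes Z^{\alpha}$ with $p \in \cP_n$ and $\alpha \in \bF_2^m$. Define $\tilde A_i \subseteq \cP_n$ to be the collection of all such first-$n$ factors. Each $\tilde A_i$ is clearly an abelian subgroup, and I claim it is maximal: if $p \in \cP_n$ commutes with every element of $\tilde A_i$, then $p \otimes I_{2^m}$ commutes with every $\tilde a \otimes Z^\alpha \in A_i$, so by maximality of $A_i$ we get $p \otimes I_{2^m} \in A_i$, whence $p \in \tilde A_i$.

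Finally, I would descend the identity $VA_1V^\dagger = A_2$ to $U\tilde A_1 U^\dagger = \tilde A_2$: for $p \in \tilde A_1$ with witness $p \otimes Z^\alpha \in A_1$, conjugation gives $(UpU^\dagger) \otimes Z^\alpha \in A_2$, so $UpU^\dagger \in \tilde A_2$; the reverse inclusion follows by applying the same argument to $V^\dagger$. By \cref{lem:sc-with-subgps} this makes $U$ semi-Clifford, contradicting the hypothesis. The main place to be careful is the first step, where I need the replacement of $A_1$ by a max abelian group containing $B$ to still map to a max abelian group under conjugation by $V$; but \cref{new-max-ab-subgp}, combined with the fact that $V$ fixes $B$ and sends $A_1$ into $\cP_{n+m}$, makes this essentially automatic.
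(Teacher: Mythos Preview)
Your proof is correct and follows essentially the same approach as the paper: prove the contrapositive, use \cref{new-max-ab-subgp} with $B=\langle Z_{n+1},\dots,Z_{n+m}\rangle$ to upgrade the maximal abelian subgroup so that it contains $B$, and then project to the first $n$ qubits to obtain a maximal abelian subgroup of $\cP_n$ sent by $U$ into $\cP_n$. The paper is slightly terser---it works inside the subgroup $G=\{P:U'P(U')^{-1}\in\cP_{n+m}\}$ and only tracks one side of the conjugation---but the content is the same.
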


\begin{proof}
We show the contrapositive. 
Suppose $U'=U\otimes I_{2^m}$ is semi-Clifford. 
Consider the subgroup $G$ of $\cP_{n+m}$ consisting of all $P$ such that $U'P(U')^{-1} \in \cP_{n+m}$. 
We know by~\cref{lem:sc-with-subgps} that $G$ contains a maximal abelian subgroup $A$ of $\cP_{n+m}$. 
Let $B=\la Z_{n+1}, \cdots, Z_{n+m} \ra \subseteq G$. 
Using~\cref{new-max-ab-subgp}, we get a maximal abelian subgroup $A'$ of $\cP_{n+m}$ such that $B\subseteq A' \subseteq \la A, B \ra \subseteq G$. 
Thus, there exists a subgroup $A_1$ of $\cP_n$ with $A'= \la A_1, B \ra$. 
We can see that $A_1$ must have at least $2^n$ elements up to phase, so it must be a maximal abelian subgroup of $\cP_n$. 
So $UA_1U^{-1} \subseteq \cP_n$, which means that $U$ is semi-Clifford.
\end{proof}

It follows from~\cref{sc-lowering,c3-isnt-sc} that $\cC_3$ contains a non-semi-Clifford element for any $n \geq 7$.

\section{Semi-Clifford Permutations} \label{sec-sc-perm}

Consider a permutation gate written as a product of Toffoli gates. This representation is said to be \emph{mismatch-free} if no qubit is used as both a control and target.
The notion of mismatch-free extends more generally (see Appendix D of~\cite{anderson}). 
Specifically, 
a Toffoli gate can be considered as a controlled controlled $X$ gate, and this can be generalized to any number of control bits: the $C^kX$ gate sends $|a_1 \ra \otimes \cdots \otimes |a_k \ra \otimes |a_{k+1} \ra \mapsto |a_1 \ra \otimes \cdots \otimes |a_k \ra \otimes |a_{k+1} +a_1 \cdots a_k \ra$. 
We refer to all these collectively as $C^*X$ gates. 
The $\cnot$ and $X$ gates are special cases of $C^*X$ gates with one and zero controls, respectively.
Then the definition of mismatch-free naturally extends to any product of $C^*X$ gates: a representation of a permutation as a product of $C^*X$ gates is mismatch-free if no qubit is used both as control and target. 
Observe that in any such mismatch-free product, every two gates commute.

\begin{lemma} \label{mismatch-free-is-sc}
Any mismatch-free product $\mu$ of $C^*X$ gates is semi-Clifford.
\end{lemma}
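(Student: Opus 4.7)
The approach is to explicitly produce a Clifford gate $\phi$ and a diagonal gate $d$ with $\mu = \phi d \phi$. Once such a decomposition is in hand, \cref{def:sc-gsc} immediately gives that $\mu$ is semi-Clifford. The key single-gate computation, analogous to the identity $H_2 \cnot_{1,2} H_2 = \mathrm{CZ}_{1,2}$, is that for any $C^kX$ gate with controls $c_1,\ldots,c_k$ and target $t$, conjugation by the single Hadamard $H_t$ produces the $C^kZ$ gate on the same qubits, which is diagonal.

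Let $T$ be the set of qubits that appear as a target in some $C^*X$ factor of $\mu$, and set $\phi := \prod_{t \in T} H_t$. This $\phi$ is a Clifford gate and, since Hadamards on distinct qubits commute and each squares to the identity, $\phi^2 = I$. Because the product is mismatch-free, every qubit in $T$ appears \emph{only} as a target, and in particular no control of any factor lies in $T$. My claim is that $d := \phi \mu \phi$ is diagonal; together with $\mu = \phi (\phi \mu \phi) \phi = \phi d \phi$, this finishes the proof.

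To verify the claim, I would write $\mu = g_1 g_2 \cdots g_m$ and insert $\phi^2 = I$ between consecutive factors to get $\phi \mu \phi = \prod_{i=1}^m (\phi g_i \phi)$. I then analyze one factor $\phi g_i \phi$: if $g_i$ has target $t_i \in T$, then the pieces $H_{t'}$ of $\phi$ with $t' \in T \setminus \{t_i\}$ act on qubits disjoint from the support of $g_i$ (since all controls of $g_i$ lie outside $T$ by mismatch-freeness), so they commute past $g_i$ and cancel. Therefore $\phi g_i \phi = H_{t_i} g_i H_{t_i}$, which by the single-gate computation is a $C^*Z$ gate and hence diagonal. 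A product of diagonal operators is diagonal, so $d$ is diagonal.

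There is no serious obstacle: the entire argument is a direct consequence of the mismatch-free hypothesis, which is precisely the combinatorial condition guaranteeing that the Hadamards chosen to diagonalize each factor do not interfere with the controls of the other factors. The one point requiring a little care is the commutation step showing $\phi g_i \phi = H_{t_i} g_i H_{t_i}$, and this reduces to the disjointness of the sets of control and target qubits across the whole product.
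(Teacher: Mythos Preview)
Your argument is correct. The key identity $H_t\,(C^kX)_{c_1,\ldots,c_k;t}\,H_t=(C^kZ)_{c_1,\ldots,c_k,t}$ holds for all $k\ge 0$ (including the degenerate case $k=0$, where it reduces to $H_tX_tH_t=Z_t$), and your commutation step is exactly right: mismatch-freeness guarantees that for $t'\in T\setminus\{t_i\}$ the qubit $t'$ lies outside the support of $g_i$, so $H_{t'}$ commutes past $g_i$ and cancels with its copy on the other side.

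Your route differs from the paper's. The paper does not construct the diagonal form at all; instead it invokes \cref{lem:sc-with-subgps}, observing that the maximal abelian subgroup $A=\langle X_t:t\in T\rangle\cdot\langle Z_s:s\notin T\rangle$ commutes with $\mu$, so $\mu A\mu^{-1}=A$. The two arguments are closely related under the hood---your Clifford $\phi=\prod_{t\in T}H_t$ is precisely a Clifford that conjugates the paper's subgroup $A$ to the all-$Z$ subgroup---but what each buys is slightly different. The paper's proof is shorter and leans on the abstract characterization, while yours is self-contained (it needs neither \cref{lem:sc-with-subgps} nor any facts about maximal abelian subgroups) and actually exhibits the decomposition $\mu=\phi d\phi$ explicitly, which is a mild bonus if one later wants to know the diagonal gate concretely.
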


\begin{proof}
Consider an $X$ gate on every target qubit and a $Z$ gate on every non-target qubit. These gates generate a maximal abelian subgroup of $\cP_n$ up to phase, and $\mu$ will commute with every element of this subgroup. The claim follows from~\cref{lem:sc-with-subgps}.
\end{proof}

\begin{theorem} \label{phi-mu-phi}
For any permutation gate $\pi$ that is semi-Clifford, there exist Clifford permutations $\phi_1, \phi_2$ and a mismatch-free product $\mu$ of $C^*X$ gates such that $\pi = \phi_1 \mu \phi_2$.
\end{theorem}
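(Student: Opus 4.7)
The plan is to exploit the characterization of semi-Clifford operators via maximal abelian subgroups (\cref{lem:sc-with-subgps}) together with the polynomial viewpoint to normalize $\pi$ by Clifford permutations on both sides until the middle piece is forced into mismatch-free form. Define
\[
W := \{w\in\mathbb{F}_2^n : u\mapsto \pi(u+w) - \pi(u) \text{ is constant in }u\}, \qquad V := \{v\in\mathbb{F}_2^n : v\cdot \pi^{-1}(\cdot) \text{ is affine}\}.
\]
A direct calculation on basis states shows $\pi(X^w Z^v)\pi^{-1}\in\cP_n$ iff $w\in W$ and $v\in V$, so the set $\tilde A := \{P\in\cP_n : \pi P\pi^{-1}\in\cP_n\}$ identifies (up to phase) with $W\oplus V\subseteq \mathbb{F}_2^{2n}$. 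The crucial first step is to produce a \emph{split} Lagrangian inside $\tilde A$: one of the form $L_X\oplus L_Z$ with $L_X\subseteq W$ and $L_Z\subseteq V$. By \cref{lem:sc-with-subgps}, some Lagrangian $L\subseteq \tilde A$ already exists. I would then show $V^\perp\subseteq W$: for $w_0\in V^\perp$, the Pauli $X^{w_0}$ commutes with every $X^x Z^y$ in $L$ (since $y\in V$ gives $w_0\cdot y=0$), so $L + \la (w_0, 0)\ra$ stays abelian, forcing $(w_0,0)\in L$ by maximality, hence $w_0\in W$. This produces the canonical split Lagrangian $V^\perp \oplus V$ of dimension $n$; call the corresponding subgroup $A$.

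Next I would normalize $A$ and $A':=\pi A\pi^{-1}$ to a standard form $A_T$ via Clifford permutations. Because $\pi$ is a permutation, $\pi X^a\pi^{-1}$ is itself a permutation matrix, and if it is in $\cP_n$ it must be a plain $X^{w'}$ (a nonzero $Z$-factor would introduce $-1$ entries); likewise $\pi Z^b\pi^{-1}$ is diagonal and equals $\pm Z^{v'}$. Hence $A'$ is also split. By \cref{clifford-perm}, a Clifford permutation with linear part $M\in GL_n(\mathbb{F}_2)$ sends $X_i\mapsto X^{Me_i}$ and $Z_i\mapsto \pm Z^{M^{-T}e_i}$, so it acts on the $X$-slot of any split Lagrangian by $M$ and on the $Z$-slot by $M^{-T}$. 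Choosing $M$ with $M(L_X) = \la e_i : i\in T_A\ra$ automatically forces $M^{-T}(L_X^\perp) = \la e_j : j\notin T_A\ra$ (orthogonal complements of complementary coordinate subspaces are themselves coordinate subspaces), producing $\phi_2$ with $\phi_2 A\phi_2^{-1} = A_{T_A}$; similarly for $\phi_1$ and $A'$. A dimension count on the $X$-parts gives $|T_A|=|T_{A'}|$, so composing with a qubit permutation aligns them to a common $T$. Replacing $\pi$ by $\tilde\pi := \phi_1^{-1}\pi\phi_2^{-1}$, I may assume $\tilde\pi A_T\tilde\pi^{-1} = A_T$.

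Finally I would decode this into a polynomial normal form. The conditions $\tilde\pi X_i \tilde\pi^{-1}\in A_T$ for $i\in T$ force $\tilde\pi(u+e_i) - \tilde\pi(u)$ to be a constant supported in $T$, so $\tilde\pi$ is affine in $u_T$; the conditions $\tilde\pi Z_j\tilde\pi^{-1}\in A_T$ for $j\notin T$ force each component $\tilde\pi_j^{-1}$ ($j\notin T$) to be affine in $y_{T^c}$ only. Combined, these give
\[
\tilde\pi(u) = \bigl(M_T u_T + g_T(u_{T^c}),\; \sigma(u_{T^c})\bigr),
\]
with $M_T\in GL_{|T|}(\mathbb{F}_2)$ invertible, $g_T$ an arbitrary polynomial function, and $\sigma$ an affine bijection of $\mathbb{F}_2^{T^c}$. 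Taking $\mu(u) := (u_T + M_T^{-1} g_T(u_{T^c}),\, u_{T^c})$ and the Clifford permutation $\psi(x,y) := (M_T x,\, \sigma(y))$ yields $\tilde\pi = \psi\mu$. Expanding each component of $M_T^{-1}g_T$ as a sum of monomials $\prod_{j\in S} u_j$ with $S\subseteq T^c$ presents $\mu$ as a commuting product of $C^{|S|}X$ gates, each targeting a qubit in $T$ with controls in $T^c$ -- manifestly mismatch-free.

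I expect the main obstacle to be the split-Lagrangian claim $V^\perp\subseteq W$ in the first paragraph: Clifford permutations lack the Hadamard and so cannot rotate between $X$- and $Z$-type generators, meaning the Lagrangian genuinely has to be split before the normalization of the second paragraph can succeed. This is the one place the proof essentially uses that $\pi$ is semi-Clifford, not merely a permutation.
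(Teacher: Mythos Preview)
Your proposal is correct and follows the same overall strategy as the paper: produce a \emph{split} maximal abelian subgroup inside $\{P:\pi P\pi^{-1}\in\cP_n\}$, normalize it to the standard form $\langle X_i:i\in T\rangle\times\langle Z_j:j\notin T\rangle$ by Clifford permutations, and read off the mismatch-free $C^*X$ product from the resulting polynomial shape. The differences are in execution rather than in concept.

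For the split-Lagrangian step, the paper takes the full $X$-part $M=G\cap\cX$ (your $W$), invokes \cref{new-max-ab-subgp} to produce a maximal abelian $A'\supseteq M$ inside $G$, and then uses \cref{pauli-is-pd} to argue that the $X$-component of every generator of $A'$ already lies in $M$, forcing $A'=\langle X_1,\ldots,X_m,Z_{m+1},\ldots,Z_n\rangle$. Your argument is the dual one and is shorter: you take the full $Z$-part $V$ and use maximality of the given Lagrangian $L$ directly to conclude $V^\perp\subseteq W$, yielding the split Lagrangian $V^\perp\oplus V$ in one line without the auxiliary \cref{new-max-ab-subgp}. Both arguments are essentially the observation that the decomposition $P=(\text{permutation})\cdot(\text{diagonal})$ is respected by conjugation by $\pi$; you package it symplectically, the paper packages it via \cref{pauli-is-pd}.

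For the normalization, the paper works one-sidedly (\cref{phi-mu}): it first conjugates so that $\pi$ commutes with $X_1,\ldots,X_m$, then fixes the signs of the $Z$-conjugates, then conjugates again to make $\pi$ commute with $Z_{m+1},\ldots,Z_n$, and finally reads off the polynomials. You normalize both $A$ and $A'=\pi A\pi^{-1}$ simultaneously to the same $A_T$ and then decode $\tilde\pi$ directly into $(M_T u_T+g_T(u_{T^c}),\sigma(u_{T^c}))$. This is tidier and makes the role of the two outer Clifford permutations $\phi_1,\phi_2$ more transparent; the cost is only that you must separately observe $M_T$ is invertible and $\sigma$ is affine, which you do. (One small bookkeeping point: with your convention $\phi_2 A\phi_2^{-1}=A_T$ and $\phi_1 A'\phi_1^{-1}=A_T$, the middle piece satisfying $\tilde\pi A_T\tilde\pi^{-1}=A_T$ is $\phi_1\pi\phi_2^{-1}$ rather than $\phi_1^{-1}\pi\phi_2^{-1}$; this only affects which of $\phi_1,\phi_1^{-1}$ you call the left factor.)
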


Before we prove the theorem, we prove some lemmas. 
Given a vector $u\in \mathbb{F}_2^n$, we use the notation $X^u$ to denote the operator $X^{u[1]}\otimes X^{u[2]}\otimes \cdots \otimes X^{u[n]}$, where $u[i]$ denote the $i$-th index of $u$, $X^1 = X$ and $X^0 = I$. 
Define $Z^u$ similarly. 
Any Pauli operator $P\in \cP_n$ has a decomposition as $P = cX^uZ^v$ for some phase $c$ and $u, v\in \mathbb{F}_2^n$. 

\begin{lemma} \label{pauli-is-pd}
Every Pauli gate can be uniquely written as the product of a permutation gate and a diagonal gate; furthermore, the permutation gate and diagonal gate are each individually Pauli.
\end{lemma}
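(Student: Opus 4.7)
I will produce an explicit decomposition from the standard $X$--$Z$ normal form of a Pauli operator, then verify uniqueness by observing that a matrix which is simultaneously diagonal and a permutation matrix must be the identity.

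For existence, I invoke the normal form recalled immediately before the lemma: every $P \in \cP_n$ admits a decomposition $P = cX^u Z^v$ for some phase $c \in \{\pm 1, \pm i\}$ and bitstrings $u, v \in \mathbb{F}_2^n$. Absorbing the scalar into the second factor, I write
\[ P = X^u \cdot (cZ^v). \]
Here $X^u$ is a genuine $\{0,1\}$-valued permutation matrix, hence a permutation gate in the sense of~\cref{sec-defs}, and $cZ^v$ is diagonal. Both factors are manifestly Pauli, so this already settles existence together with the claim that each factor is individually Pauli.

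For uniqueness, suppose $\pi_1 d_1 = \pi_2 d_2$ where $\pi_1, \pi_2$ are permutation gates and $d_1, d_2$ are diagonal gates. Rearranging gives $\pi_2^{-1}\pi_1 = d_2 d_1^{-1}$. The left-hand side is a permutation matrix and the right-hand side is diagonal, so the common matrix has its $\{0,1\}$-entries supported only on the diagonal; being invertible, it must be the identity. This forces $\pi_1 = \pi_2$, and $d_1 = d_2$ follows.

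The only subtle point — and it is hardly an obstacle — is the paper's convention that a permutation gate is a literal $\{0,1\}$-valued permutation matrix rather than a scalar multiple of one. This prevents absorbing a phase like $-1$ into the permutation factor and rules out spurious alternatives such as $-I = (-I)\cdot I = I \cdot (-I)$. Once that convention is observed, both halves of the lemma fall out of the normal form in a line each.
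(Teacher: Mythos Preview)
Your proof is correct and essentially identical to the paper's own argument: both use the $X$--$Z$ normal form $P = cX^uZ^v$ to exhibit the decomposition $X^u \cdot (cZ^v)$, and both prove uniqueness by observing that a diagonal permutation matrix must be the identity. Your explicit remark about the $\{0,1\}$-matrix convention for permutation gates is a helpful clarification that the paper leaves implicit.
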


\begin{proof}
Any Pauli operator $P$ can be written as $P = cX^uZ^v$, where $p = X^u$ is a permutation gate and $d = cZ^v$ is a diagonal gate. 
It remains to show uniqueness of the representation. To see this, suppose $P=p'd'$ for permutation $p'$ and diagonal $d'$. 
We have $(p')^{-1}p = (p')^{-1}Pd^{-1} = d'd^{-1}$. 
Since $(p')^{-1}p$ is a permutation matrix, $d'd^{-1}$ is a diagonal matrix, and the only diagonal permutation matrix is the identity, we must have $(p')^{-1}p = d'd^{-1} = I$. Therefore $p'=p$ and $d'=d$, as desired.
\end{proof}

\begin{lemma} \label{conj-cliff-perm}
Suppose $X'_1, X'_2, \cdots, X'_m\in \cX$ are independent (that is, no nontrivial product of them yields the identity). 
Then there exists some Clifford permutation $\nu$ such that $\nu X_i \nu^{-1}=X'_i$ for all $i\in [m]$.
\end{lemma}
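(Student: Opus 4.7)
The plan is to translate the claim into a linear algebra statement using the polynomial viewpoint and then read off the desired Clifford permutation $\nu$ from \cref{clifford-perm}.

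First, I would encode each $X'_i \in \cX$ as a vector in $\mathbb{F}_2^n$. Specifically, write $X'_i = X^{u_i}$ for a unique $u_i \in \mathbb{F}_2^n$, where the notation $X^{u}$ was introduced just before \cref{pauli-is-pd}. Since multiplication of elements of $\cX$ corresponds to addition of the associated vectors in $\mathbb{F}_2^n$, the hypothesis that $X'_1, \ldots, X'_m$ are independent is exactly the statement that $u_1, \ldots, u_m$ are linearly independent over $\mathbb{F}_2$.

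Next, I would extend $u_1, \ldots, u_m$ to a basis $u_1, \ldots, u_n$ of $\mathbb{F}_2^n$ and let $M$ be the invertible $n \times n$ matrix over $\mathbb{F}_2$ whose $i$-th column is $u_i$, so $Me_i = u_i$ for every $i \in [n]$. By \cref{clifford-perm}, the permutation $\nu$ defined by $|v\rangle \mapsto |Mv\rangle$ is Clifford. I would then reuse the computation inside the proof of \cref{clifford-perm}: for each $i \in [m]$, the conjugation $\nu X_i \nu^{-1}$ sends $|v\rangle \mapsto |v + Me_i\rangle = |v + u_i\rangle$, so $\nu X_i \nu^{-1} = X^{u_i} = X'_i$, as desired.

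This proof is essentially a one-step application of the $M$-and-$w$ characterization of Clifford permutations, so there is no real obstacle; the only thing to be careful about is the bookkeeping that independence of the Pauli $X$-operators is the same as linear independence of their exponent vectors, which is immediate because $\cX \cong (\mathbb{F}_2^n, +)$ under multiplication.
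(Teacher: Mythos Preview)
Your proof is correct and follows essentially the same approach as the paper: both encode the $X'_i$ by vectors in $\mathbb{F}_2^n$, use independence to obtain an invertible matrix $M$ with $Me_i$ equal to those vectors, define $\nu$ by $|v\rangle \mapsto |Mv\rangle$, and verify $\nu X_i \nu^{-1} = X'_i$ via the same conjugation computation. The only cosmetic difference is that you explicitly extend to a basis and use the $X^u$ notation, whereas the paper speaks of the map $|v\rangle \mapsto |v+v_i\rangle$ directly.
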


\begin{proof}
Note that we can view each $X'_i$ as a map $|v \ra \mapsto |v + v_i \ra$. 
The independence property gives that $v_1, \cdots, v_m$ are linearly independent, \emph{i.e.}, there exists an invertible matrix $M$ such that $Me_i=v_i$ for all $i\in [m]$. 
Let $\nu$ be $|v \ra \mapsto |Mv \ra$ which is a Clifford permutation by~\cref{clifford-perm}. 
Then $\nu X_i \nu^{-1}$ sends \[ |v \ra \mapsto |M^{-1}v \ra \mapsto |M^{-1}v+e_i \ra \mapsto |M(M^{-1}v+e_i) \ra = |v + Me_i \ra = |v+v_i \ra, \]
which means that $\nu X_i \nu^{-1} = X'_i$, as desired.
\end{proof}

The following lemma is a special case of~\cref{phi-mu-phi}.

\begin{lemma}\label{phi-mu}
Suppose $\pi$ is a permutation gate, and $m \leq n$ is a nonnegative integer such that \\ $\pi X_1 \pi^{-1}, \cdots, \pi X_m \pi^{-1}, \pi Z_{m+1} \pi^{-1}, \cdots, \pi Z_n \pi^{-1} \in \cP_n$. Then there exist Clifford permutations $\phi_1, \phi_2$ and a mismatch-free product $\mu$ of $C^*X$ gates such that $\pi = \phi_1 \mu \phi_2$. 
\end{lemma}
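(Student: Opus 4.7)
The plan is to peel off a Clifford permutation from each side of $\pi$, using the $X$-conditions and the $Z$-conditions separately, so that what remains in the middle is forced to be a mismatch-free product of $C^*X$ gates. First I would use the $X$-conditions: for each $i \leq m$, the operator $\pi X_i \pi^{-1}$ is simultaneously a permutation matrix and an element of $\cP_n$, so it equals $X^{u_i}$ for some $u_i \in \mathbb{F}_2^n$ (no overall sign can arise, since a $\pm$-shifted $X^{u_i}$ is a permutation matrix only in the $+$ case). The $u_i$ are linearly independent because the $X_i$ are, so \cref{conj-cliff-perm} produces a Clifford permutation $\phi_1$ with $\phi_1 X_i \phi_1^{-1} = X^{u_i} = \pi X_i \pi^{-1}$. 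Setting $\sigma := \phi_1^{-1}\pi$, the operator $\sigma$ commutes with each of $X_1, \ldots, X_m$. Translating this commutation into the polynomial representation of \cref{def:polynomial_rep} forces $\sigma_k(a) = a_k + g_k(a_{m+1}, \ldots, a_n)$ for $k \leq m$ and $\sigma_j(a) = h_j(a_{m+1}, \ldots, a_n)$ for $j > m$, where $h := (h_{m+1}, \ldots, h_n)$ is necessarily an invertible self-map of $\mathbb{F}_2^{n-m}$.

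Next I would use the $Z$-conditions. Since $\phi_1$ is Clifford, $\sigma Z_j \sigma^{-1} = \phi_1^{-1}(\pi Z_j \pi^{-1})\phi_1 \in \cP_n$ for each $j > m$, and being a permutation-conjugate of a diagonal operator it is itself diagonal, hence equal to $\pm Z^{v_j}$. Reading off diagonal entries forces $(\sigma^{-1})_j$ to be affine in the input. Because $\sigma^{-1}$ also commutes with $X_1, \ldots, X_m$ and thus inherits the same block structure as $\sigma$, this affine function in fact depends only on $a_{m+1}, \ldots, a_n$ when $j > m$; hence $h^{-1}$, and therefore $h$, is an affine permutation of $\mathbb{F}_2^{n-m}$. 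Letting $\phi_2$ act as the identity on qubits $1, \ldots, m$ and as $h$ on qubits $m+1, \ldots, n$, \cref{clifford-perm} gives that $\phi_2$ is Clifford.

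Finally I would check that $\mu := \sigma \phi_2^{-1}$ is a mismatch-free product of $C^*X$ gates. A direct polynomial computation shows $\mu$ fixes the last $n-m$ qubits and sends $a_i \mapsto a_i + (g_i \circ h^{-1})(a_{m+1}, \ldots, a_n)$ for each $i \leq m$. Expanding each polynomial $g_i \circ h^{-1}$ into $\mathbb{F}_2$ monomials and realizing each monomial by a single $C^*X$ gate (with controls drawn from $\{m+1, \ldots, n\}$ and target $i$) expresses $\mu$ as a product of $C^*X$ gates whose targets all lie in $\{1, \ldots, m\}$ and whose controls all lie in $\{m+1, \ldots, n\}$; these sets are disjoint, so the product is mismatch-free. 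The decomposition $\pi = \phi_1 \mu \phi_2$ follows. The main obstacle I anticipate is the middle step, forcing the last $n-m$ block of $\sigma$ to be affine: this is exactly where the polynomial viewpoint and the diagonal-Pauli structure must be combined carefully, whereas the remaining steps are essentially bookkeeping in the polynomial representation.
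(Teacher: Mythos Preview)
Your proposal is correct and follows essentially the same route as the paper: first use \cref{conj-cliff-perm} to reduce to a permutation commuting with $X_1,\ldots,X_m$, then use the $Z$-conditions to force the residual action on the last $n-m$ coordinates to be affine, and read off the mismatch-free $C^*X$ structure from the polynomial representation. The only cosmetic difference is that you peel off the affine block $h$ as a single right Clifford $\phi_2$ working in the polynomial picture, whereas the paper accomplishes the same normalization at the operator level via a further left Clifford $\varpi$ together with a right $X$-correction $\chi$ absorbing the signs $\epsilon_j$.
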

\begin{proof}
Let $X'_i = \pi X_i \pi^{-1}$. By~\cref{conj-cliff-perm}
we can take a Clifford permutation $\nu$ such that $X'_i = \nu X_i \nu^{-1}$. 
Replacing $\pi$ with $\nu^{-1} \pi$, which preserves the property that $\pi Z_j \pi^{-1} \in \cP_n$ for $m+1\le j\le n$, we can assume without loss of generality that $\pi$ commutes with $X_1, \cdots, X_m$. 

For $m+1\le j\le n$, $\pi Z_j \pi^{-1}$ is a diagonal gate in the Pauli group, \emph{i.e.,} $\epsilon_j Z^{w_j}$ for some vector $w_j$ and $\epsilon_j=\pm 1$. 
Since $\pi Z_j \pi^{-1}$ commutes with $\pi X_i \pi^{-1} = X_i$ for $i\in [m]$, we must have $w_j$ is zero on the first $m$ components for all $m+1\le j\le n$. 
Let $\chi$ be the product of $X_j$ over all $j$ with $\epsilon_j = -1$. By replacing $\pi$ with $\pi\chi$, we can assume without loss of generality 
that $\epsilon_j = 1$ for all $j$, while preserving the property that $\pi$ commutes with $X_1, \cdots, X_m$, and without changing $w_{m+1}, \cdots, w_n$.

Since $Z_j$ are independent, the vectors $w_j$ are also linearly independent. 
Since the first $m$ components of each $w_j$ are zeros, $e_1, \cdots, e_m, w_{m+1}, \cdots, w_n$ forms a linear basis. 
Hence, there exists an invertible matrix $M$ with $Me_i=e_i$ for $i\in [m]$ and $Me_j = w_j$ for $m+1\le j\le n$. 
Consider the map $\varpi$ defined as $|v \ra \mapsto |M^\top v \ra$ which is a Clifford permutation by~\cref{clifford-perm}. 
Then, $\varpi (\pi Z_j \pi^{-1}) \varpi^{-1} = \varpi Z^{Me_j} \varpi^{-1}$ sends 
\[ 
|v \ra \mapsto |(M^\top )^{-1}v \ra 
\mapsto (-1)^{v^\top M^{-1}Me_j} |(M^\top )^{-1} v \ra 
\mapsto (-1)^{v^\top e_j} |v \ra, 
\] 
so $\varpi \pi Z_j \pi^{-1} \varpi^{-1} = Z_j$. Also, since the first $m$ components of $w_j$ are zeros for $m+1\le j\le n$, we have $M^\top e_i = e_i$ for $i = 1, \cdots, m$.
Therefore, $\varpi \pi X_i \pi^{-1} \varpi^{-1} = \varpi X_i \varpi^{-1}$ sends
\[ 
|v \ra \mapsto |(M^\top )^{-1} v \ra \mapsto | (M^\top )^{-1}v + e_i \ra \mapsto |M^\top ((M^\top )^{-1}v+e_i) \ra 
= |v+e_i \ra, 
\]
so $\varpi \pi X_i \pi^{-1} \varpi^{-1} = X_i$.
Since $\varpi$ is a Clifford permutation and $\varpi \pi$ commutes with $X_1, \cdots, X_m$ and $Z_{m+1}, \cdots, Z_n$, by replacing $\pi$ with $\varpi \pi$, we can assume without loss of generality that $\pi$ commutes with $X_1, \cdots, X_m, Z_{m+1}, \cdots, Z_n$. 

Now consider the polynomial representation $(\pi_1, \cdots, \pi_n)$ of $\pi$. 
For $m+1\le j\le n$, $\pi Z_j = Z_j \pi$ implies that $\pi_j(v) = v_j$ for $v \in \bF_2^n$. 
For $1\leq j\leq m$, $\pi X_j = X_j \pi$ implies that
$\pi(v+e_j) = \pi(v) + e_j$, \emph{i.e.,} $\pi_i(v+e_j) = \pi_i(v)$ for $i\neq j$, and $\pi_j(v+e_j) = \pi_j(v) + 1$. 
So $\pi_j$ is $v_j$ plus a polynomial $p_j$ in terms of only $v_{m+1}, \cdots, v_n$. 

Note that every monomial in $p_j$ corresponds to a $C^*X$ gate with qubit $j$ as target and a subset of qubits in $\{m+1, \cdots, n\}$ as controls. 
Now $\pi$ is the product of all these $C^*X$ gates and is mismatch-free, since qubits $1, \cdots, m$ are used only as targets and qubits $m+1, \cdots, n$ are never used as targets.
\end{proof}

We now prove~\cref{phi-mu-phi} by reducing to the case of~\cref{phi-mu}.
\begin{proof}[Proof of~\cref{phi-mu-phi}]
Let $G$ be the subgroup of $\cP_n$ of all elements $P$ with $\pi P \pi^{-1} \in \cP_n$, and let $M$ be the set consisting of all permutations in $G$; 
now $M = G \cap \cX$, so $M$ is an abelian subgroup of $G$. 
By~\cref{conj-cliff-perm} we can find a Clifford permutation $\nu$ such that $M = \nu \la X_1, \cdots, X_m \ra \nu^{-1}$ for some $m$.
If we replace $\pi$ by $\pi\nu$, we would replace $G$ with $\nu^{-1}G \nu$ and replace $M$ with 
\[
\nu^{-1}G \nu \cap \cX = \nu^{-1}G \nu \cap \nu^{-1} \cX \nu = \nu^{-1} M \nu = \la X_1, \cdots, X_m \ra.
\]
Therefore, let us assume without loss of generality that $M = \la X_1, \cdots, X_m \ra$ for some $m$.

We know $G$ contains a maximal abelian subgroup $A$ of $\cP_n$, since $\pi$ is semi-Clifford. 
Applying~\cref{new-max-ab-subgp} on $A$ and $M$, we get a maximal abelian subgroup $A'$ of $\cP_n$ with $M \subseteq A' \subseteq \la A, M \ra \subseteq G$. 
We claim that $A' = \la X_1, \cdots, X_m, Z_{m+1}, \cdots, Z_n\ra$ up to phase.
To see this, take a basis $X_1, \cdots, X_m, W_{m+1}, \cdots, W_n$ for $A'$.
Decompose $W_i = c_iX^{u_i}Z^{v_i}$.
We can assume the first $m$ indices of $u_i$ are zeros. 
Since $W_i$ commutes with $X_1, \cdots, X_m$, the first $m$ indices of $v_i$ must be zeros. 
It now suffices to show that $u_i = 0$ for all $m < i \le n$.

Let $p = X^{u_i}$ and $d = c_iZ^{v_i}$.
Since $W_i\in A' \subseteq G$, we have $W_i' = \pi W_i\pi^{-1}\in \cP_n$.
Note that 
\[
W_i' = \pi pd \pi^{-1} = (\pi p \pi^{-1})(\pi d \pi^{-1}),
\]
where $\pi p \pi^{-1}$ is a permutation and $\pi d \pi^{-1}$ is diagonal. 
It follows from~\cref{pauli-is-pd} that this decomposition is unique and $\pi p \pi^{-1}, \pi d \pi^{-1} \in \cP_n$. 
So $p, d \in G$.  
Since $p \in \cX$, we have $p \in G \cap \cX = M = \la X_1, \cdots, X_m \ra$. 
In other words, $u_i[j] = 0$ for all $j > m$. 
Therefore, we have $u_i=0$ and $A' = \la X_1, \cdots, X_m, Z_{m+1}, \cdots, Z_n\ra$ up to phase.
The theorem then follows from~\cref{phi-mu}. 
\end{proof}

The following is the main theorem of this section. 

\begin{theorem} \label{sc-level}
For any positive integer $k$, a permutation gate $\pi$ is a semi-Clifford gate in $\cC_{k+1}$ if and only if there exist Clifford permutations $\phi_1, \phi_2$ and a mismatch-free product $\mu$ of $C^*X$ gates such that $\pi = \phi_1 \mu \phi_2$ and, in $\mu$, each gate has at most $k$ controls.
\end{theorem}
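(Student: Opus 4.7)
The plan is to establish the two directions separately, leaning heavily on \cref{phi-mu-phi} (which already gives the $\phi_1\mu\phi_2$ form) and on the cited Anderson theorem (which controls the hierarchy level of mismatch-free products).

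For the easier direction, assume $\pi = \phi_1\mu\phi_2$ with $\mu$ a mismatch-free product of $C^*X$ gates each having at most $k$ controls. Semi-Cliffordness follows immediately from \cref{mismatch-free-is-sc} together with the observation (made just after \cref{def:sc-gsc}) that left/right multiplication by Clifford gates preserves the semi-Clifford property. For membership in $\cC_{k+1}$, note that each gate $C^iX$ with $i\leq k$ controls lies in $\cC_{i+1}\subseteq \cC_{k+1}$, so the cited Anderson theorem puts $\mu\in \cC_{k+1}$; then \cref{prop:cliff_hier}(2) absorbs $\phi_1,\phi_2$ and gives $\pi\in \cC_{k+1}$.

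For the forward direction, suppose $\pi\in \cC_{k+1}$ is a semi-Clifford permutation. Apply \cref{phi-mu-phi} to write $\pi = \phi_1\mu\phi_2$ with $\mu$ mismatch-free. By \cref{prop:cliff_hier}(2), $\mu = \phi_1^{-1}\pi\phi_2^{-1} \in \cC_{k+1}$. The key point is now to read off the maximum control count of $\mu$ from the polynomial representation of \cref{sec-poly}. After permuting qubits via a Clifford permutation, assume the control qubits of $\mu$ lie in $\{1,\dots,r\}$ and the targets in $\{r+1,\dots,n\}$; by the mismatch-free condition these sets are disjoint. Then $\mu_i(a) = a_i$ for $i\leq r$, and for each target $j>r$,
\begin{equation*}
   \mu_j(a) \;=\; a_j \;+\; \sum_{S \in \mathcal{S}_j} \prod_{i\in S} a_i,
\end{equation*}
where $\mathcal{S}_j$ indexes the $C^*X$ gates in $\mu$ with target $j$, each $S\subseteq \{1,\dots,r\}$ being its control set. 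Since distinct gates give distinct sets $S$, no monomial cancellation occurs, so $\deg \mu_j = \max_{S\in \mathcal{S}_j}|S|$ and the overall polynomial degree of $\mu$ equals the largest control count appearing.

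The final step is to bound this degree using \cref{ck-perm-poly}. Because $\mu$ is mismatch-free, all its constituent gates commute, and each $C^*X$ gate is its own inverse, so $\mu^{-1}=\mu$. Therefore \cref{ck-perm-poly} applied with level $k+1$ forces each coordinate of $\mu = \mu^{-1}$ to have degree at most $k$, which by the previous paragraph means every $C^*X$ gate in $\mu$ has at most $k$ controls. The main subtlety (and the place where one must be most careful) is the no-cancellation observation connecting the polynomial degree of $\mu$ to the maximum number of controls; once that is in hand, the rest is bookkeeping with \cref{prop:cliff_hier} and the previously established lemmas.
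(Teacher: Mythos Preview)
Your proof is correct and follows the same approach as the paper's: both directions use exactly the same ingredients (\cref{mismatch-free-is-sc} and Anderson's theorem for the ``if'' direction; \cref{phi-mu-phi}, \cref{prop:cliff_hier}(2), the involution $\mu^{-1}=\mu$, and \cref{ck-perm-poly} for the ``only if'' direction). One small point to tighten: before invoking the no-cancellation argument, you should note (as the paper does) that since the gates in $\mu$ commute and are involutions, you may first delete any gate appearing an even number of times, so that without loss of generality the gates in $\mu$ are pairwise distinct; only then does ``distinct gates give distinct sets $S$'' actually preclude monomial cancellation.
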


\begin{proof} 
For the ``if'' direction, $\pi$ being semi-Clifford follows from~\cref{mismatch-free-is-sc}, and $\pi\in \cC_{k+1}$ follows directly from~\cite[Theorem D.3]{anderson} (along with part 2 of~\cref{prop:cliff_hier}).

For the ``only if'' direction, we apply~\cref{phi-mu-phi} to get a representation $\phi_1 \mu \phi_2$ where $\phi_1$ and $\phi_2$ are Clifford permutations and $\mu$ is a mismatch-free product of $C^*X$ gates. As any two gates in such a product commutes, and every such gate is its own inverse, we can assume without loss of generality that no gate is repeated. 
By part 2 of~\cref{prop:cliff_hier}, $\mu \in \cC_{k+1}$. 
By~\cref{ck-perm-poly}, in the polynomial representation of $\mu^{-1}$, every coordinate has degree at most $k$. 
Note that $\mu^{-1} = \mu$. 
If there is a gate in $\mu$ with $m>k$ controls, this would yield a monomial of degree $m$ in $\mu$ which would not be canceled out.
Therefore every gate in $\mu$ has at most $k$ controls, as desired.
\end{proof}

Our result has two immediate corollaries. 

\begin{corollary} \label{sc-perms-in-c3}
A permutation gate $\pi$ is a semi-Clifford gate in $\cC_3$ if and only if there exist Clifford permutations $\phi_1, \phi_2$ and a mismatch-free product $\mu$ of Toffoli gates such that $\pi = \phi_1 \mu \phi_2$.
\end{corollary}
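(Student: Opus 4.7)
The plan is to derive this corollary directly from \cref{sc-level} by handling the low-degree $C^*X$ gates (the plain $X$ gates and $\cnot$ gates) separately from the Toffoli gates.

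For the ``if'' direction, suppose $\pi = \phi_1\mu\phi_2$ where $\mu$ is a mismatch-free product of Toffoli gates. Since every Toffoli gate has exactly $2$ controls, this is a mismatch-free product of $C^*X$ gates in which every gate has at most $2$ controls. Applying \cref{sc-level} with $k=2$ immediately yields that $\pi$ is semi-Clifford and lies in $\cC_3$.

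For the ``only if'' direction, I would start by invoking \cref{sc-level} with $k=2$: this gives Clifford permutations $\phi_1,\phi_2$ and a mismatch-free product $\mu$ of $C^*X$ gates with at most $2$ controls each, such that $\pi = \phi_1 \mu \phi_2$. The gates appearing in $\mu$ are therefore of three types: plain $X$ gates, $\cnot$ gates, and Toffoli gates. The key observation is that all gates in a mismatch-free product commute, so up to reordering I may write $\mu = \mu_T \mu_C$, where $\mu_T$ collects the Toffoli gates and $\mu_C$ collects the $X$ and $\cnot$ gates. The sub-product $\mu_T$ inherits the mismatch-free property, and $\mu_C$ is a product of Clifford permutations, hence is itself a Clifford permutation.

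Finally I absorb $\mu_C$ into the right-hand Clifford factor: setting $\phi_2' := \mu_C \phi_2$, which is still a Clifford permutation, I obtain $\pi = \phi_1 \mu_T \phi_2'$, the desired decomposition. I do not anticipate a real obstacle here: the only subtle point is justifying that the $X$ and $\cnot$ gates can be separated from the Toffolis, which is immediate from the commutativity observation for mismatch-free products noted just after the definition of mismatch-free in \cref{sec-sc-perm}, together with the fact that restricting to a subset of the gates preserves the mismatch-free property.
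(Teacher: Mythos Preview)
Your proposal is correct and is precisely the argument the paper has in mind: the paper states this as an immediate corollary of \cref{sc-level} without proof, and your filling-in---applying the $k=2$ case, then commuting the $X$ and $\cnot$ factors past the Toffolis and absorbing them into $\phi_2$---is the intended justification.
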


\begin{corollary}
Every semi-Clifford permutation gate is in $\cC_n$. 
(Thus every semi-Clifford permutation gate is in $\mathcal{CH}$.)
\end{corollary}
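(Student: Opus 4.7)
The plan is to combine the structural characterization in~\cref{phi-mu-phi} with the level-count in~\cref{sc-level}. Concretely, I would start from an arbitrary semi-Clifford permutation gate $\pi$ on $n$ qubits. By~\cref{phi-mu-phi}, there exist Clifford permutations $\phi_1,\phi_2$ and a mismatch-free product $\mu$ of $C^*X$ gates such that $\pi=\phi_1\mu\phi_2$.

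Next I would bound the number of controls appearing in $\mu$. Each $C^*X$ gate acts on the fixed set of $n$ qubits and uses exactly one target qubit, so it has at most $n-1$ control qubits. Therefore $\mu$ is a mismatch-free product of $C^*X$ gates in which every gate has at most $n-1$ controls. Invoking the ``if'' direction of~\cref{sc-level} with $k=n-1$, we conclude that $\pi\in\cC_n$, which gives the stated corollary. The parenthetical remark that $\pi\in\mathcal{CH}$ then follows immediately from the definition $\mathcal{CH}=\bigcup_{k\ge 1}\cC_k$.

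There is essentially no obstacle here: the work has already been done in~\cref{phi-mu-phi} and~\cref{sc-level}, and the only thing to notice is the trivial bound that a $C^*X$ gate on $n$ qubits with a single target has at most $n-1$ controls. The proof is therefore a two-line deduction from the two preceding theorems.
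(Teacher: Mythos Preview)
Your proposal is correct and is essentially identical to the paper's proof, which likewise cites \cref{phi-mu-phi} and \cref{sc-level} together with the observation that a $C^*X$ gate on $n$ qubits has at most $n-1$ controls.
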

\begin{proof}
The claim follows from~\cref{phi-mu-phi,sc-level} and the fact that a $C^*X$ gate on $n$ qubits has at most $n-1$ controls. 
\end{proof}

\section{Permutations in \texorpdfstring{$\cC_3$}{C₃}} \label{sec-c3-perm}

\begin{definition}
    A product of Toffoli gates is said to be in \emph{staircase form} if each gate $\tof_{i,j,k}$ in the product has $i,j < k$ 
    and the target qubits are in nondecreasing order in the order that the gates are applied.
    (See~\cref{fig:eg_staircase} for an example.) 
\end{definition}
\begin{figure}[!ht]
    \centering
    \begin{equation*}
        \begin{quantikz}[slice style=blue] 
        \lstick{$a_1$}&\ctrl{2}&\ctrl{3}&\ctrl{3}&\rstick{$a_1$}\\
        \lstick{$a_2$}&\control{}&&\control{}&\rstick{$a_2$}\\
        \lstick{$a_3$}&\targ{}&\control{}&&\rstick{$a_3+a_1a_2$}\\
        \lstick{$a_4$}&&\targ{}&\targ{}&\rstick{$a_4+a_1a_3$}
        \end{quantikz}
    \end{equation*}
    \caption{This circuit for  $\tof_{1,2,4}\tof_{1,3,4}\tof_{1,2,3}$ is in staircase form but not mismatch-free, as qubit $3$ is used as a control for $\tof_{1,3,4}$ and a target for $\tof_{1,2,3}$.}
    \label{fig:eg_staircase}
\end{figure}
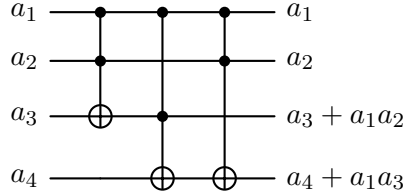

The main result of this section is the following.

\begin{theorem} \label{c3-staircase}

Suppose $\pi \in \cC_3$ is a permutation gate. Then there exist Clifford permutations $\phi_1, \phi_2$ and a product of Toffoli gates $\mu$ in staircase form such that $\pi = \phi_1\mu\phi_2$.

\end{theorem}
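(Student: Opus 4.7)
By \cref{ck-perm-poly}, each coordinate of $\sigma := \pi^{-1}$ is a polynomial of degree at most $2$, so we may write $\sigma(a) = La + c + Q(a)$ with $L$ an $n \times n$ matrix over $\mathbb{F}_2$, $c \in \mathbb{F}_2^n$, and $Q$ a vector of homogeneous quadratic polynomials. The plan is to find Clifford permutations $\alpha, \beta$ such that $\alpha \sigma \beta$ takes the \emph{reverse staircase form} $a \mapsto (a_1, a_2 + q_2, \ldots, a_n + q_n(a_1, \ldots, a_{n-1}))$, where each $q_k$ is a sum of monomials $a_i a_j$ with $i < j < k$. Such an $\alpha \sigma \beta$ is exactly $\mu^{-1}$ for a staircase Toffoli product $\mu$ (with $\mu^{-1}$ applying the same Toffolis in nonincreasing target order), so setting $\phi_1 = \beta$, $\phi_2 = \alpha$ gives $\pi = \phi_1 \mu \phi_2$.

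The normalization proceeds in two stages. First, I exploit that $\pi X^u \pi^{-1}$ is a Clifford permutation for every $u \in \mathbb{F}_2^n$ (since $\pi \in \cC_3$), yielding the identity $\sigma(M_u a + w_u) = \sigma(a) + u$ for involutions $M_u$ and shifts $w_u = \pi(u) - \pi(0)$, where $u \mapsto M_u$ is a group homomorphism into an abelian subgroup of $GL_n(\mathbb{F}_2)$. Expanding in the polynomial representation and matching degrees gives $Q(M_u a) = Q(a)$ and $(L + B_{w_u}) M_u = L$, where $B_e$ denotes $v \mapsto B(v, e)$ from the polarization of $Q$; in particular, $L + B_{w_u}$ is invertible. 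Right-multiplying $\pi$ by the Clifford translation $v \mapsto v + w_u$ (for any fixed $u$) and left-multiplying by an affine Clifford then normalizes $\sigma$ to the form $\sigma(a) = a + \tilde Q(a)$, where $\tilde Q$ is purely quadratic homogeneous and invariant under the commuting family of involutions $\{\tilde M_u\}$ (the conjugated $M_u$'s).

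Second, I inductively lower-triangularize $\tilde Q$ so that $\tilde Q_k$ depends only on $a_1, \ldots, a_{k-1}$. Clifford permutations preserving the affine form $\sigma = a + Q(a)$ act on $Q$ by $Q \mapsto B^{-1}(I + B_e)^{-1} Q(B \cdot)$ for $(B, e) \in GL_n(\mathbb{F}_2) \times \mathbb{F}_2^n$. Using that the commuting family $\{\tilde M_u\}$ of involutions can be simultaneously upper-triangularized over $\mathbb{F}_2$ (since each $\tilde M_u - I$ is nilpotent of index at most $2$), I fix a basis in which all $\tilde M_u$ are upper triangular; the invariance $\tilde Q(\tilde M_u a) = \tilde Q(a)$ and the bijectivity of $\sigma$ then force enough structural constraints on $\tilde Q$ to support a downward induction from $k = n$ to $k = 3$, at stage $k$ applying a restricted $(B, e)$-conjugation that eliminates $a_i a_j$ terms with $j \geq k$ from $\tilde Q_k$ without disrupting coordinates $> k$.

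The main obstacle will be this second stage: because the $(B, e)$-action has a nonlinear component (through the $(I + B_e)^{-1}$ factor), the elimination at stage $k$ can propagate changes to other coordinates, and one must carefully exploit the invariance of $\tilde Q$ under the full commuting family $\{\tilde M_u\}$—a structural consequence of $\pi \in \cC_3$ going strictly beyond the degree bound on $\sigma$—to guarantee that the inductive reduction terminates.
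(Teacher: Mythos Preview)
Your broad strategy---exploit that each $\pi X^u \pi^{-1}$ is a Clifford permutation, simultaneously triangularize the commuting nilpotents $M_u - I$, then deduce the staircase structure---aligns with the paper's, but both of your stages contain genuine gaps.

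\textbf{Stage 1 is circular.} From $(L + B_{w_u})M_u = L$ with $M_u$ invertible you only get $L + B_{w_u} = LM_u$, so invertibility of $L + B_{w_u}$ is \emph{equivalent} to that of $L$---precisely what you need in order to normalize to $\sigma(a) = a + \tilde Q(a)$. And $L$ invertible does not follow from $\sigma$ being a degree-$2$ bijection: for instance $\sigma(a) = (a_1+a_3+a_1a_2,\; a_2+a_3+a_1a_2,\; a_1a_2+a_1a_3+a_2a_3)$ is a bijection of $\mathbb{F}_2^3$ with singular linear part. The paper secures $L = I$ (equivalently $b_j = e_j$, since with $\pi(0)=0$ one has $b_j = \pi(e_j)$ and $Le_j = \sigma(e_j)$) only \emph{after} triangularizing the $A_j$, via the nontrivial ``twisted Gaussian elimination'' on the pairs $(A_j, b_j)$ in \cref{twisted-gauss-elim}. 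That lemma has no analog in your proposal, and the order matters: the compose/swap moves of \cref{twisted-gauss-elim} preserve strict lower-triangularity, which is why the two normalizations can coexist.

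\textbf{Stage 2 is, as you concede, incomplete.} Beyond the obstacle you flag, note that over $\mathbb{F}_2$ a linear substitution $a \mapsto Ba$ in a homogeneous quadratic $Q$ generally produces \emph{linear} terms (via $a_i^2 = a_i$), so conjugation by a general $B$ does not preserve the form $\sigma(a) = a + Q(a)$; this further constrains the $(B,e)$ moves you have available for the elimination. The paper avoids any such inductive elimination: once the $A_j$ are strictly lower triangular and $b_j = e_j$, no further conjugation is used. A short induction on the position $\alpha(v)$ of the first nonzero bit shows $\alpha(\pi(v)) = \alpha(v)$ for every $v$; applying this with $v = e_j + w_{ij}$ (where $w_{ij}$ records which coordinates $q_k$ contain the monomial $a_ia_j$) immediately yields $\alpha(w_{ij}) > j$, i.e.\ the staircase condition, directly.
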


We caution that the statement of~\cref{c3-staircase}, unlike~\cref{sc-level}, is not an if-and-only-if statement. In particular, note that $\tof_{3,4,5}\tof_{1,2,3}$ is a product of Toffoli gates in staircase form but is not in $\cC_3$, as $\tof_{3,4,5}\tof_{1,2,3}X_1\tof_{1,2,3}\tof_{3,4,5} = X_1 \cnot_{2,3}\tof_{2,4,5}$.

To construct such a representation of $\pi$, we consider the operators $\pi X_j\pi^{-1}$. Since these operators are Clifford permutations, by~\cref{clifford-perm}, we have binary matrices $A_j$ and vectors $b_j$ such that $\pi X_j\pi^{-1}$ implements the permutation $|v \ra \mapsto |v + A_jv+b_j \ra$. 
The high level idea of our proof is clear: using a sequence of Clifford operators, we will reduce $A_j, b_j$ to a specific form, which will help us build a staircase form representation of $\pi$. 
Let us begin by proving several useful lemmas.

The following lemma is essentially the same as standard results on simultaneous triangularization of commuting nilpotent matrices; see, for example, \cite{radjavi2000simultaneous}. We include a proof for completeness. 
\begin{lemma} \label{simul-lower-tri}
Suppose $A_1, \cdots, A_k$ are linear transformations of an $n$-dimensional vector space $V$ over a field $F$ such that $A_j^2=0$ and $A_iA_j=A_jA_i$. Then there exists a basis of $F^n$ in which all the $A_i$ are strictly lower triangular. Recall that a matrix is \emph{strictly lower triangular} if it is lower triangular, and all diagonal elements are $0$.
\end{lemma}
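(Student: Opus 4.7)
The plan is to proceed by induction on the dimension $n$, with the inductive step powered by the observation that commuting nilpotent operators always share a nonzero null vector, which we can take as the ``last'' basis vector $e_n$ and then quotient out to reduce to a smaller vector space.

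The base case $n=1$ is immediate: any $1\times 1$ matrix $A$ with $A^2=0$ is the zero matrix, which is trivially strictly lower triangular. For the inductive step, I will first produce a nonzero vector $v \in V$ lying in $\bigcap_{i=1}^k \ker A_i$. This is done by a secondary induction on $k$. For $k=1$, the kernel of $A_1$ is nonzero since $A_1^2=0$ and $A_1\ne 0$ would still force $\ker A_1 \ne 0$ (and if $A_1=0$ we are trivially done). Suppose inductively that $W := \bigcap_{i=1}^{k-1} \ker A_i \ne 0$. Because $A_k$ commutes with each $A_i$, it preserves each $\ker A_i$ and hence preserves $W$. Then $A_k|_W$ is a linear operator on the nonzero space $W$ satisfying $(A_k|_W)^2 = 0$, so its kernel in $W$ is nonzero. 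Any nonzero element there gives the desired $v$.

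Next, extend $v$ to a basis of $V$ and set $e_n := v$. Pass to the quotient $\overline V := V / \SPAN(v)$, which has dimension $n-1$. Each $A_i$ descends to a well-defined operator $\overline A_i$ on $\overline V$ (since $v \in \ker A_i$), and the induced operators still pairwise commute and satisfy $\overline A_i^{\,2}=0$. By the outer inductive hypothesis on $n$, there is a basis $\overline e_1, \dots, \overline e_{n-1}$ of $\overline V$ in which every $\overline A_i$ is strictly lower triangular. Lifting each $\overline e_j$ arbitrarily to an element $e_j \in V$ gives a basis $e_1, \dots, e_{n-1}, e_n$ of $V$. The strict-lower-triangularity of $\overline A_i$ means $\overline A_i \overline e_j \in \SPAN(\overline e_{j+1}, \dots, \overline e_{n-1})$ for $j < n$, which lifts to $A_i e_j \in \SPAN(e_{j+1}, \dots, e_{n-1}, e_n)$, and $A_i e_n = A_i v = 0$. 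Thus $A_i$ is strictly lower triangular in this basis, completing the induction.

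I do not expect any serious obstacle: the argument is a standard double induction, and the hypothesis $A_j^2=0$ is only used insofar as it implies each $A_j$ is nilpotent (the same proof works for arbitrary commuting nilpotent operators). The one point that needs a little care is making sure the lifts $e_j$ of $\overline e_j$ combine correctly with $e_n = v$ to give a basis of $V$, and that the strict-lower-triangular condition on $\overline V$ lifts (rather than merely reduces) to the corresponding condition on $V$; this works precisely because we placed $v$ as $e_n$ rather than $e_1$, so the ``extra'' term $A_i e_j$ is allowed to acquire a component along $e_n$.
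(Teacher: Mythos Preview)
Your proof is correct and follows essentially the same approach as the paper: find a nonzero vector in the common kernel, place it last in the basis, pass to the quotient, and induct on $n$. The only minor variation is that the paper establishes the existence of a common null vector by a maximality argument (take a nonzero $v$ killed by as many $A_i$ as possible, then observe that $A_jv$ for any $j$ with $A_jv\neq 0$ is killed by strictly more of them), whereas you obtain it by an inner induction on $k$, using that $A_k$ preserves $\bigcap_{i<k}\ker A_i$; both arguments are standard and equally elementary.
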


\begin{proof} 
First we show that the intersections of kernels of $A_i$, namely $\cap_{i}\ker(A_i)$, is non-empty. 
Assume for the sake of contradiction it is empty, and let $v$ be a non-zero vector which maximizes the number of indices $i$ for which $A_iv=0$. 
Take $j$ with $A_jv \neq 0$, we see that $A_i(A_jv) = A_jA_iv = 0$ for any $i$ with $A_iv = 0$, and $A_j(A_jv) = A_j^2v = 0$. 
Therefore, $A_jv$ is in more kernels $\ker(A_i)$ than $v$ is, contradicting our assumption on $v$. 
Hence, there must exist non-zero $v$ such that for all $i$, $A_iv = 0$.

We now induct on $n$. 
Consider the $(n-1)$-dimensional vector space $V/\{v\}$. 
Since $v\in \cap_{i}\ker(A_i)$, all linear transformations $A_i$ are well-defined on $V/ \{v\}$ and satisfy $A_iA_j = A_jA_i$ and $A_i^2=0$. 
So there exists a basis $v_1+\{v\}, \cdots, v_{n-1}+\{v\}$ of $V/\{v\}$ in which all the $A_i$ are strictly lower triangular. 
Now take the basis $v_1, v_2, \cdots, v_{n-1}, v$ (in that order) on $V$, one can check that all $A_i$ are strictly lower triangular, as desired.
\end{proof}

For any nonzero column vector $v$ over $\mathbb{F}_2$, let $\alpha(v)$ denote the index of its first nonzero component. Let $\alpha(0) = \infty$ by convention.

\begin{lemma} \label{lower-tri-mult}
Suppose $A$ is an $n \times n$ strictly lower triangular matrix over $\mathbb{F}_2$, and $b$ is a nonzero column vector in $\mathbb{F}_2^n$. Then $\alpha(Ab)>\alpha(b)$. 
\end{lemma}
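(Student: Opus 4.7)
The plan is to unwind the definitions and do a short direct computation using the strict lower triangularity. Let $k = \alpha(b)$, so by definition $b_j = 0$ for all $j < k$ and $b_k \neq 0$. I want to show that the first $k$ entries of $Ab$ vanish, which will give $\alpha(Ab) \geq k+1 > k = \alpha(b)$.

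The key step is to compute $(Ab)_i$ for $i \leq k$. By strict lower triangularity, $A_{ij} = 0$ whenever $j \geq i$, so
\[
(Ab)_i = \sum_{j=1}^{n} A_{ij} b_j = \sum_{j < i} A_{ij} b_j.
\]
For $i \leq k$, every index $j$ in the sum satisfies $j < i \leq k = \alpha(b)$, and so $b_j = 0$. Hence $(Ab)_i = 0$ for all $i \leq k$, giving $\alpha(Ab) \geq k+1$.

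Finally I handle the edge case $Ab = 0$ separately by invoking the convention $\alpha(0) = \infty$, which trivially satisfies $\alpha(Ab) > \alpha(b)$. I do not anticipate any real obstacle here; the lemma is essentially just the observation that strictly lower triangular matrices strictly increase the position of the leading nonzero entry, which is the combinatorial content of ``shifts downward.''
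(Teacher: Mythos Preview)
Your argument is correct and is exactly what the paper means by ``follows directly from the definition of strictly lower triangular''; you have simply spelled out the one-line computation in detail.
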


\begin{proof}
This follows directly from the definition of strictly lower triangular.
\end{proof}

\cref{lower-tri-mult} will be used tacitly throughout what follows.

\begin{lemma} \label{twisted-gauss-elim}
Suppose we have a list of tuples $(A_1, b_1), \cdots, (A_n, b_n)$, where each $A_i$ is an $n \times n$ strictly lower triangular matrix over $\mathbb{F}_2$ and each $b_i$ is a column vector in $\mathbb{F}_2^n$. Suppose we can perform the following operations:

\begin{enumerate}

\item ``Swap'': swap the indices of two pairs $(A_i, b_i)$ and $(A_j, b_j)$, or

\item ``Compose'': choose two distinct indices $i$ and $j$, and update $A_i$ to be $A_i+A_j+A_iA_j$ and update $b_i$ to be $b_i+b_j+A_ib_j$.
\end{enumerate}
Then it is possible to perform operations either to reach a state where $b_i=e_i$ for all $i$, or to reach a state where some $b_i$ is $0$. 
Recall that $e_i$ denote the standard basis vectors of $\mathbb{F}_2^n$.
\end{lemma}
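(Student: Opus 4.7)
The plan is strong induction on $n$. The base case $n=1$ is immediate: $A_1$ is forced to be zero, and $b_1$ is either $0$ or $e_1 = 1 \in \bF_2$, so we already are in one of the two target states. For the inductive step I use the block decomposition of strictly lower triangular matrices: write
\[
A_i = \begin{pmatrix} 0 & 0 \\ c_i & A_i' \end{pmatrix}
\quad\text{and}\quad
b_i = \begin{pmatrix} b_i[1] \\ b_i' \end{pmatrix},
\]
where $A_i'$ is an $(n-1) \times (n-1)$ strictly lower triangular matrix, $c_i \in \bF_2^{n-1}$, and $b_i' \in \bF_2^{n-1}$. A direct computation shows that whenever we compose index $i$ with an index $j$ for which $b_j[1] = 0$, the sub-data obeys exactly the lemma's rule in dimension $n-1$: $A_i' \mapsto A_i' + A_j' + A_i' A_j'$ and $b_i' \mapsto b_i' + b_j' + A_i' b_j'$, with $A_i'$ again strictly lower triangular. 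This compatibility is what lets me recurse. I then split on whether some $b_i$ has $\alpha(b_i) = 1$.

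\textbf{Case A:} some $\alpha(b_i) = 1$. After a swap, assume $b_1[1] = 1$. For each $j > 1$ with $b_j[1] = 1$, composing $j$ with $1$ sends $b_j \mapsto b_j + b_1 + A_j b_1$; since the first row of $A_j$ is zero, the new first entry of $b_j$ is $1 + 1 + 0 = 0$. If any $b_j$ becomes $0$ along the way, we are done. Otherwise $b_j[1] = 0$ for all $j > 1$, so by the compatibility above the pairs $(A_j', b_j')_{j=2}^{n}$ form an $(n-1)$-dimensional instance. The inductive hypothesis then yields either some $b_j' = 0$ (hence $b_j = 0$) or $b_j = e_j$ for all $j = 2, \ldots, n$. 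In the latter subcase I clean up $b_1$ by processing $j = 2, 3, \ldots, n$ in order: if $b_1[j] = 1$, compose $1$ with $j$, which replaces $b_1$ by $b_1 + e_j + A_1 e_j$. Because $A_1$ remains strictly lower triangular after every compose, $\alpha(A_1 e_j) > j$ by \cref{lower-tri-mult}, so this operation flips bit $j$ of $b_1$ while preserving all earlier bits; after all iterations $b_1 = e_1$.

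\textbf{Case B:} every $b_i[1] = 0$. The first $n-1$ sub-pairs $(A_i', b_i')_{i=1}^{n-1}$ form an $(n-1)$-dimensional instance; the inductive hypothesis gives either some $b_i' = 0$ (so $b_i = 0$) or $b_i = e_{i+1}$ for $i = 1, \ldots, n-1$. In the latter subcase, either $b_n = 0$ already, or I clean $b_n$ down to zero by the same sequential procedure, composing $n$ with $j = 1, \ldots, n-1$ whenever $b_n[j+1] = 1$. The main obstacle in both cases is the sequential cleanup: although $A_1$ (respectively $A_n$) is modified at every compose, its strict lower triangularity is preserved, and \cref{lower-tri-mult} guarantees that each flip touches only the targeted bit and bits below it, leaving previously cleared bits intact. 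The rigidity of strict lower triangularity under the compose rule is what makes this process terminate in the desired state.
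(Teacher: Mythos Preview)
Your proof is correct, and it takes a genuinely different route from the paper's.

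The paper argues non-recursively in two phases: first it chooses a reachable state maximizing $\sum_i \alpha(b_i)$ (using finiteness of the state space) and observes that in such a state the $\alpha(b_i)$ must be pairwise distinct, so after swaps $\alpha(b_i)=i$; then it performs a row-reduction cleanup, composing $(A_i,b_i)$ with $(A_k,b_k)$ where $k=\alpha(b_i+e_i)$ to strictly increase $\alpha(b_i+e_i)$ until $b_i=e_i$. Your argument instead inducts on $n$ via the block decomposition of strictly lower triangular matrices, splitting on whether any $b_i$ has a nonzero first coordinate, and recursing on the $(n-1)\times(n-1)$ sub-data. Both arguments rest on the same elementary fact (\cref{lower-tri-mult}) that $\alpha(Ab)>\alpha(b)$ keeps strictly lower triangular perturbations from disturbing earlier coordinates during cleanup.

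What each buys: the paper's maximization argument is shorter and avoids case analysis, but is mildly non-constructive (one must search over reachable states). Your induction is fully explicit and yields a concrete recursive algorithm; the cost is the two-case split and the need to verify that the full $n$-dimensional compose operation restricts correctly to the $(n-1)$-dimensional sub-data (which you do check, noting that $b_j[1]=0$ is preserved and the bottom-right block obeys the same rule). A small stylistic point: in Case~B you terminate in the ``some $b_i=0$'' branch rather than the ``$b_i=e_i$ for all $i$'' branch, which is perfectly fine for the lemma as stated but worth saying explicitly.
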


\begin{proof}
First note that the new matrix given by ``compose'' is always strictly lower triangular.
Let us assume without loss of generality that we cannot reach $b_i=0$ for any $i$. 
We describe a two-phase procedure which will reach the state $b_i=e_i$ for all $i$.

For the first phase of the process, we will reach a state with $\alpha(b_i)=i$ for all $i$, as follows. There are finitely many reachable states, so we can reach a state maximizing the value of $\sum_{i=1}^n \alpha(b_i)$ over all reachable states. In this state, the values of $\alpha(b_i)$ must be pairwise distinct. To see this, suppose $\alpha(b_i) = \alpha(b_j) = k$ for some $i \neq j$. Then note that $\alpha(b_i+b_j) > k$ and $\alpha(A_ib_j) > k$, so $\alpha(b_i+b_j+A_ib_j) > k = \alpha(b_i)$. 
This means if we compose $(A_i, b_i)$ with $(A_j, b_j)$ to obtain $(A_i+A_j+A_iA_j, b_i+b_j+A_ib_j)$, we will increase the value of $\sum_{i=1}^n \alpha(b_i)$, which is a contradiction. 
Therefore $\alpha(b_1), \cdots, \alpha(b_n)$ are pairwise distinct, so they must equal $1,2, \cdots, n$ in some order. Perform swaps so that $\alpha(b_i)=i$ for all $i$, this completes the first phase.

The second phase of our procedure is simply row reduction. Suppose there exists $b_i\ne e_i$ and let $\alpha(b_i + e_i) = k > i$. Then we can compose $(A_i, b_i)$ with $(A_k, b_k)$ to get the new vector $b_i + b_k + A_ib_k$. Observe that 
\[
\alpha(b_i + e_i + b_k) > k, \alpha(A_ib_k) > k \Rightarrow \alpha(b_i + b_k + A_ib_k + e_i) > k. 
\]
Therefore we can repeat this procedure until $\alpha(b_i + e_i) > n$, which means $b_i = e_i$. Repeating this for all $i$ leads to our desired state.
\end{proof}

We are now ready to prove~\cref{c3-staircase}.

\begin{proof}[Proof of~\cref{c3-staircase}]

By multiplying $\pi$ by suitable $X$'s on the left, we assume without loss of generality that $\pi|0^n \ra = |0^n \ra$.

As discussed above, each $\pi X_j \pi^{-1}$ is a Clifford permutation, so by~\cref{clifford-perm} we can  
write it as $|v \ra \mapsto |v + A_jv+b_j \ra$ for some matrix $A_j$ and vector $b_j$ over $\mathbb{F}_2$. 
Since $X_j^2 = I$ and $X_iX_j = X_jX_i$, we have $A_j^2 = 0$ and $A_iA_j = A_jA_i$. 
By~\cref{simul-lower-tri}, these conditions imply that there is some basis in which the $A_j$ are simultaneously strictly lower triangular, 
so we can take some matrix $M$ such that, for all $i$, $MA_iM^{-1}$ is strictly lower triangular. 
Let $\mu$ be the permutation gate $|v \ra \mapsto |Mv \ra$, which is Clifford by~\cref{clifford-perm}. 
The map $(\mu \pi) X_j (\mu \pi)^{-1}$ sends 
\[
|v \ra \mapsto |M^{-1}v \ra \mapsto |M^{-1}v + A_jM^{-1}v + b_j \ra \mapsto |v + MA_jM^{-1}v + Mb_j \ra. 
\] 
Therefore, by replacing $\pi$ with $\mu \pi$, we can assume without loss of generality that all matrices $A_j$ are strictly lower triangular.

We now apply~\cref{twisted-gauss-elim} to reduce $b_i$ to $e_i$. Note that the map $\pi X_iX_j\pi^{-1} = (\pi X_i\pi^{-1})(\pi X_j\pi^{-1})$ sends
\begin{align*}
|v \ra \mapsto |v+A_jv+b_j \ra \mapsto& |(v+A_jv+b_j) + A_i(v+A_jv+b_j)+b_i \ra \\ =& |v+(A_i+A_j+A_iA_j)v+b_i+b_j+A_ib_j \ra,
\end{align*}
which corresponds to the compose operation. 
Therefore, by~\cref{twisted-gauss-elim}, there exists a sequence of swaps and multiplications which transform the generators $ X_1, \cdots, X_n $ to $ X_1', \cdots, X_n'$, where each $X_i'$ is a product of $X$ gates, 
such that either $\pi X_i'\pi^{-1}$ sends $|v \ra \mapsto |v+A_i'v+e_i \ra$ for all $i$, or there exists $i$ such that $\pi X_i'\pi^{-1}$ sends $|v \ra \mapsto |v+A_i'v\ra$. 
However, the latter case cannot happen, as otherwise $\pi X_i'\pi^{-1}$ sends $\ket{0^n}\mapsto \ket{0^n + A_i'0^n} = \ket{0^n}$, which contradicts the assumption that $\pi\ket{0^n} = \ket{0^n}$.

Since $X'_1, \cdots, X'_n$ form a basis for $\cX$, by~\cref{conj-cliff-perm} there exists a Clifford permutation $\nu$ such that $\nu X_i \nu^{-1} = X'_i$ for all $i$. Therefore, if we replace $\pi$ with $\pi \nu$, we get $(\pi \nu) X_i (\pi \nu)^{-1} = \pi X'_i \pi^{-1}$, which means we can assume without loss of generality that $b_i=e_i$ for all $i$. 
In particular, we have $\pi|e_i \ra = (\pi X_i \pi^{-1}) |0^n \ra = |e_i \ra$. 

Next we show that for any $v$, if $\pi |v \ra = |w \ra$, then $\alpha(v)=\alpha(w)$. Suppose for the sake of contradiction that this is false. We know it is true for $v=0^n$ or $e_n$, so $\alpha(v) < n$ in any counterexample. Take the largest $k$ for which there exists a counterexample with $\alpha(v)=k$. We know $\pi|e_k \ra= |e_k \ra$, so $v \neq e_k$. Let $u\ne 0^n$ be such that $\pi |v+e_k \ra = |u \ra$. Then $\alpha(v+e_k) > k$, so $\alpha(u) = \alpha(v+e_k)= m$ for some $m > k$. 
We have 
\[
|w \ra = \pi |v \ra = \pi X_k |v+e_k \ra = \pi X_k \pi^{-1} |u \ra = \ket{u+A_ku+e_k}.
\] 
Since $\alpha(u)=m>k$, and $\alpha(A_ku) >m >k$, we must have $\alpha(w) = \alpha(e_k+u+A_ku) = k = \alpha(v)$, which is a contradiction. 

We now build a polynomial representation (see~\cref{def:polynomial_rep}) for $\pi^{-1}$. By~\cref{ck-perm-poly}, every coordinate of $\pi^{-1}$ has degree at most $2$. 
Since 
$\pi^{-1}|0^n \ra = |0^n \ra$ and 
$\pi^{-1}|e_i \ra = |e_i \ra$ for all $i$, we have that the constant term of every coordinate is $0$ and the
linear term of the $i$-th coordinate is $a_i$ for all $i$.
Thus we can write $\pi^{-1}$ as 
\[
|a_1, \cdots, a_n \ra \mapsto |a_1+q_1, \cdots, a_n+q_n \ra,
\]
where each $q_k$ is a sum of some (possibly zero) monomials of the form $a_ia_j$ with $i<j$.

For any $i<j$, we have $\pi^{-1}|e_i+e_j \ra = |e_i+e_j+w_{ij} \ra$, where $w_{ij}$ has ones exactly at the positions $k$ for which $q_k$ contains the monomial $a_ia_j$. 
Let $v$ be such that $\pi|e_j+w_{ij} \ra = |v \ra$, we have
\begin{align*}
    \pi X_i \pi^{-1} |v \ra 
    &= \pi X_i |e_j+w_{ij} \ra = \pi |e_i+e_j+w_{ij} \ra = |e_i+e_j \ra \\
    &= \ket{v+A_iv+e_i},
\end{align*}
which means $v+A_iv = e_j$, and $j=\alpha(v+A_iv)=\alpha(v)$. 
Since $\pi|e_j+w_{ij} \ra = |v \ra$, we must also have $\alpha(e_j+w_{ij}) = \alpha(v)$. Thus $\alpha(e_j+w_{ij})=j$, which means $\alpha(w_{ij})>j$. 
In other words, any appearance of an $a_ia_j$ term can only be in a $q_k$ with $k>j$.

This is precisely the condition we need to write $\pi$ as a product of Toffoli gates in staircase form. 
For $k$ ranging from $n$ to $1$, for each $a_ia_j$ appearing in $q_k$, apply $\tof_{i,j,k}$.
Note that when we apply $\tof_{i,j,k}$, we must have $i<j<k$, and we have not yet applied any Toffoli gates with target less than $k$.
In particular, we have not yet applied any Toffoli gates with target $i$ or $j$, which means this Toffoli gate does indeed add $a_ia_j$ to the $k$-th coordinate. 
Therefore this product of Toffoli gates does indeed yield $\pi^{-1}$. 
To obtain a decomposition of $\pi$, we simply reverse the order of this product. The reversed product is then in staircase form. 
We conclude that, in general, $\pi$ is a product of Toffoli gates in staircase form up to Clifford permutations on the left and right, as desired.
\end{proof}

\section{The Smallest Non\texorpdfstring{--}{–}Semi-Clifford Permutation} \label{sec-seven-is-best}

Some computations in this section use C++ and Magma~\cite{magma}. 
We plan to make the code public in a future version.

Let us define a $7$-qubit permutation gate $R$ as a product of Toffoli gates in staircase form:
\begin{equation*}
    R = \tof_{1,6,7}\tof_{2,5,7}\tof_{3,4,7}\tof_{2,3,6}\tof_{1,3,5}\tof_{1,2,4}.
\end{equation*}

\begin{figure}[!ht]
    \centering
    \begin{equation*}
        \begin{quantikz}[slice style=blue] 
        \lstick{$a_1$}&\ctrl{3}&\ctrl{4}&&&&\ctrl{6}&\rstick{$a_1$}\\
        \lstick{$a_2$}&\control{}&&\ctrl{4}&&\ctrl{5}&&\rstick{$a_2$}\\
        \lstick{$a_3$}&&\control{}&\control{}&\ctrl{4}&&&\rstick{$a_3$}\\
        \lstick{$a_4$}&\targ{}&&&\control{}&&&\rstick{$a_4+a_1a_2$}\\
        \lstick{$a_5$}&&\targ{}&&&\control{}&&\rstick{$a_5+a_1a_3$}\\
        \lstick{$a_6$}&&&\targ{}&&&\control{}&\rstick{$a_6+a_2a_3$}\\
        \lstick{$a_7$}&&&&\targ{}&\targ{}&\targ{}&\rstick{$a_7+a_1a_6+a_2a_5+a_3a_4+a_1a_2a_3$}
        \end{quantikz}
    \end{equation*}
    \caption{The non--semi-Clifford permutation gate $R\in \cC_3$.}
    \label{fig:seven_perm}
\end{figure}
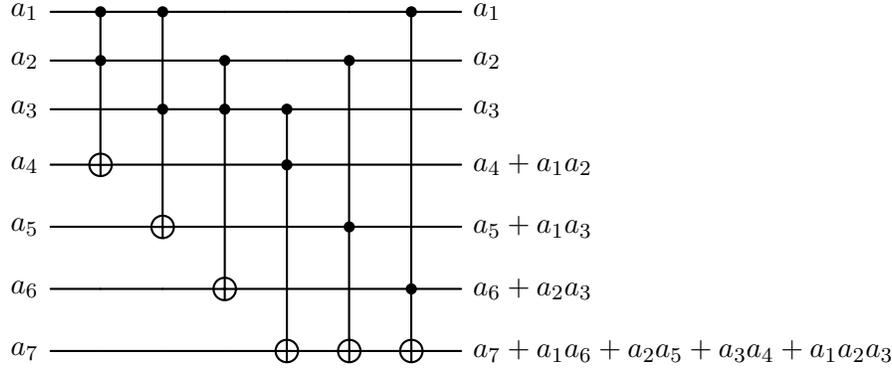

\begin{remark} \label{r-isnt-quad}
In~\cref{fig:seven_perm}, note that the seventh coordinate of $R$ is of degree $3$, so this is an example of the form mentioned in~\cref{non-quad-perm}.
\end{remark}

Recall the Gottesman--Mochon $7$-qubit gate (see~\cref{fig:circ_gottesman_Mochon} for the circuit)
\begin{align*}
    \cC_3 \ni G &= \cswap_{7,1,6}\cswap_{7,2,5}\cswap_{7,3,4} \cdot \ccz_{1,2,4}\ccz_{1,3,5}\ccz_{2,3,6}\ccz_{4,5,6}, 
\end{align*}
which is not semi-Clifford, as in the proof of~\cref{c3-isnt-sc}. 
Let us define a $7$-qubit Clifford gate $F$: 
\begin{align*}
    F &= H_4H_5H_6\cnot_{6,1}\cnot_{5,2}\cnot_{4,3}H_7.
\end{align*}

\begin{lemma} \label{computation-of-7-qubit-ex}
$R$ is a non--semi-Clifford permutation in $\cC_3$ on $7$ qubits and $FGF^{-1}=R$.
\end{lemma}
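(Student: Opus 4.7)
The plan is to reduce all three claims to the single identity $FGF^{-1}=R$, which is then a finite operator computation.

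The permutation claim is immediate from the expression for $R$ as a product of six Toffolis. Granting $FGF^{-1}=R$, Lemma~\ref{c3-isnt-sc} gives $G\in \cC_3$, and part~2 of Proposition~\ref{prop:cliff_hier} (applied on each side with the Clifford gate $F$) yields $R=FGF^{-1}\in \cC_3$. For the non--semi-Clifford assertion, I would argue by contradiction: if $R = \phi_1 d\phi_2$ with $\phi_1,\phi_2$ Clifford and $d$ diagonal, then
\[
G = F^{-1}RF = (F^{-1}\phi_1)\,d\,(\phi_2 F)
\]
would be semi-Clifford (since $F^{-1}\phi_1$ and $\phi_2 F$ are Clifford), contradicting Lemma~\ref{c3-isnt-sc}.

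It remains to verify $FGF^{-1}=R$ as operators on $\mathbb{C}^{128}$. My plan is to do this directly by computer, as the authors indicate they do elsewhere in the section: one compares the action of $FG$ and $RF$ on each of the $128$ computational basis states, and since $F$ contains only four Hadamards, each basis-state output is a sum over at most $2^{4}=16$ terms. If one wanted an algebraic sketch, the natural starting point is to decompose each controlled swap via $\cswap_{7,i,j}=\cnot_{j,i}\tof_{7,i,j}\cnot_{j,i}$, giving $G=CTCD$ with $C=\cnot_{6,1}\cnot_{5,2}\cnot_{4,3}$, $T=\tof_{7,1,6}\tof_{7,2,5}\tof_{7,3,4}$, and $D=\ccz_{1,2,4}\ccz_{1,3,5}\ccz_{2,3,6}\ccz_{4,5,6}$, so that $F=H_4H_5H_6\cdot C\cdot H_7$ carries a matching $C$ factor and one can try to use $H_j\ccz_{\cdot,\cdot,j}H_j=\tof_{\cdot,\cdot,j}$ to convert CCZs into Toffolis.

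The main obstacle with this algebraic route is that the factors do not commute cleanly: $C$ does not commute with $D$ (for instance $\cnot_{6,1}\ccz_{1,2,4}\cnot_{6,1}=\ccz_{1,2,4}\ccz_{6,2,4}$ produces an extra CCZ), $H_7\tof_{7,i,j}H_7$ is not a permutation, and $\ccz_{4,5,6}$ gets conjugated by all three of $H_4,H_5,H_6$ at once. Any by-hand derivation must track these non-permutation intermediates carefully and see them cancel in the final product, which is precisely why the direct matrix verification is the preferred route.
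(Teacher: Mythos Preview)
Your proposal is correct and matches the paper's proof almost exactly: the paper also defers $FGF^{-1}=R$ to direct computation, deduces $R\in\cC_3$ from $G\in\cC_3$ via Clifford conjugation, and offers precisely your contradiction argument (if $R$ were semi-Clifford then so would $G=F^{-1}RF$) as one of two proofs that $R$ is non--semi-Clifford. The only difference is that the paper's primary argument for the non--semi-Clifford claim is the polynomial one---the seventh coordinate of $R$ has degree $3$, so $R^{-1}\notin\cC_3$ by \cref{ck-perm-poly}, hence $R$ is not semi-Clifford by \cref{lem:respect-sc}---with your conjugation argument listed as the alternative.
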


\begin{proof}
Checking that $FGF^{-1}=R$ is a direct computation. 
Then $R \in \cC_3$ since $G \in \cC_3$. (Alternatively, one can directly check $R \in \cC_3$, as this only requires checking $RX_i R^{-1}$ and $RZ_iR^{-1}$ are Clifford for all $i$.) 

To see that $R$ is non--semi-Clifford, note that $R^{-1} \notin \cC_3$, by the $k=2$ case of~\cref{ck-perm-poly}, as the seventh coordinate of $R$ has degree $3$. Thus $R$ is non--semi-Clifford, by~\cref{lem:respect-sc}. (Alternatively, if $R$ were semi-Clifford, then $G = F^{-1}RF$ would also be semi-Clifford, contradiction.)
\end{proof}

Note that, by~\cref{sc-lowering}, this implies that, for all $n \geq 7$, $\cC_3$ contains a non--semi-Clifford permutation.

\begin{lemma} \label{cant-do-six}
All permutations in $\cC_3$ on at most six qubits are semi-Clifford.
\end{lemma}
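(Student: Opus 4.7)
The plan is to reduce the statement to a finite enumeration and settle it by a computer-assisted check.

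First, by \cref{sc-lowering} in contrapositive form, it is enough to treat the case $n=6$. Indeed, if $U \in \cC_3$ is a permutation on $n < 6$ qubits, then $U \otimes I_{2^{6-n}}$ is a permutation in $\cC_3$ on $6$ qubits (tensoring with identities trivially preserves $\cC_3$), and if this padded operator is semi-Clifford then so is $U$.

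Second, apply \cref{c3-staircase}: every permutation $\pi \in \cC_3$ on $6$ qubits factors as $\phi_1 \mu \phi_2$ with $\phi_1, \phi_2$ Clifford permutations and $\mu$ a staircase-form product of Toffoli gates; moreover $\mu = \phi_1^{-1} \pi \phi_2^{-1}$ also lies in $\cC_3$ by part 2 of \cref{prop:cliff_hier}. Since the semi-Clifford property is preserved by left and right multiplication by Cliffords, it suffices to verify the statement for every staircase-form $\mu$ that happens to lie in $\cC_3$. Because Toffoli gates sharing a target commute and are involutions, a staircase product on $6$ qubits is specified by picking, for each target $k \in \{3,4,5,6\}$, a subset of the $\binom{k-1}{2}$ available Toffoli gates $\tof_{i,j,k}$ with $i,j<k$; this yields at most $2^{1+3+6+10} = 2^{20}$ candidates.

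For each candidate $\mu$ we run two tests. (i) Verify $\mu \in \cC_3$ by checking that each of $\mu X_i \mu^{-1}$ and $\mu Z_i \mu^{-1}$ lies in $\cC_2$, which in turn amounts to conjugating a further set of Pauli generators through it and confirming the results are Paulis. (ii) Verify that $\mu$ is semi-Clifford via \cref{lem:sc-with-subgps}, by computing the subgroup $G = \{P \in \cP_6 : \mu P \mu^{-1} \in \cP_6\}$ and deciding whether it contains a maximal abelian subgroup of $\cP_6$. After identifying $\cP_6$ modulo phase with $\bF_2^{12}$ equipped with the commutation symplectic form, $G$ maps to a subspace $\overline G$, and such a maximal abelian subgroup exists inside $G$ iff $\overline G$ contains a $6$-dimensional isotropic subspace; this is decided by $\dim_{\bF_2} \overline G$ together with the dimension of its symplectic radical $\overline G \cap \overline G^\perp$, a routine linear-algebra computation.

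The principal obstacle is computational rather than conceptual. To keep the $2^{20}$ enumeration manageable, one traverses the staircase tree by increasing target qubit, updates the Pauli-conjugation data incrementally as each new Toffoli gate is appended, and prunes any branch where some $\mu X_i \mu^{-1}$ has already left $\cC_2$. Additional speedup may be obtained by quotienting out Clifford permutation symmetries that preserve the staircase form, for instance relabeling controls within the lowest-target layer. Running the enumeration and finding no $\mu \in \cC_3$ that fails the semi-Clifford test completes the proof.
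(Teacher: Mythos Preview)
Your proposal is correct and follows essentially the same route as the paper: reduce to $n=6$ via \cref{sc-lowering}, reduce to staircase form via \cref{c3-staircase}, observe that this leaves at most $2^{\binom{6}{3}}=2^{20}$ candidates, and settle the rest by computer. The only difference is in the semi-Clifford witness: you test whether the subgroup $G=\{P:\mu P\mu^{-1}\in\cP_6\}$ contains a Lagrangian via symplectic linear algebra, whereas the paper reports the stronger empirical fact that one of five explicitly listed maximal abelian subgroups always serves as the witness.
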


\begin{proof}
By~\cref{sc-lowering}, it suffices to show that all permutations on exactly six qubits in $\cC_3$ are semi-Clifford.\footnote{Note that the process of adding inert qubits to a gate (that is, qubits on which the gate acts as the identity) preserves the property of being in $\cC_k$, for any $k$, by induction on $k$. 
Thus, if we had a non--semi-Clifford permutation in $\cC_3$ on fewer than six qubits, adding inert qubits to it yields a six-qubit permutation gate in $\cC_3$ that is not semi-Clifford, by~\cref{sc-lowering}.}
Then, by~\cref{c3-staircase}, it suffices to show that any permutation in $\cC_3$ on six qubits that is written as the product of Toffoli gates in staircase form is semi-Clifford.

Now note that, given a permutation gate in staircase form, any two Toffoli gates with the same target commute; then we can cancel out any repeat appearances of a gate (note that $\tof_{i,j,k}$ and $\tof_{j,i,k}$ are the same, so we assume all the gates have $i<j<k$ here and for the rest of this theorem's proof); let us do so. Then we can see that a gate in staircase form only depends on which gates appear, since the order of gates is fixed by the staircase form up to reordering gates with the same target, and such gates commute anyway. Thus the number of permutations to consider is upper bounded by $2^{20}$ (since now there are $\binom{6}{3} = 20$ relevant Toffoli gates).

At this point a computer search is viable; 
we show by checking all $2^{20}$ options that for any six-qubit permutation $\pi$ in staircase form, if $\pi$ is in $\cC_3$ and is not mismatch-free, 
then there must exist a maximal abelian subgroup $A$ of $\cP_6$ such that $\pi A\pi^{-1}$ is a maximal abelian subgroup of $\cP_6$. 
In fact, we show a stronger result (by exhaustively checking) that 
$A$ can always be chosen by the following list:
\begin{itemize}
\item $\la Z_1, Z_2, Z_3Z_4, X_3X_4, X_5, X_6 \ra$,
\item $\la Z_1, Z_2, Z_3, Z_4Z_5, X_4X_5, X_6 \ra$,
\item $\la Z_1, Z_2, Z_3Z_5, Z_4, X_3X_5, X_6 \ra$,
\item $\la Z_1, Z_2, Z_3Z_5, Z_4Z_5, X_3X_4X_5, X_6 \ra$, and
\item $\la Z_1, Z_2, Z_3Z_4, Z_5, X_3X_4, X_6 \ra$. \qedhere
\end{itemize}
\end{proof}

The following result is immediate with~\cref{computation-of-7-qubit-ex,cant-do-six}.
\begin{theorem} \label{seven-is-best}
The smallest number of qubits for which there exists a non--semi-Clifford permutation in $\cC_3$ is $7$. 
\end{theorem}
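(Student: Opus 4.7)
The plan is to sandwich the answer by invoking the two preceding lemmas in opposite directions: the upper bound ``$7$ qubits suffice'' is precisely~\cref{computation-of-7-qubit-ex}, which exhibits the explicit staircase-form gate $R$ that lies in $\cC_3$ but is not semi-Clifford, and the lower bound ``$6$ qubits are not enough'' is precisely~\cref{cant-do-six}. Combining them yields $n=7$, so this theorem is essentially a one-line packaging of those results.

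The genuinely new content therefore lives entirely in the two supporting lemmas. For the upper bound, the witness $R$ is built by Clifford-conjugating the Gottesman--Mochon example into staircase form, and non--semi-Cliffordness is detected via $R^{-1}\notin\cC_3$ using~\cref{ck-perm-poly} on the degree-$3$ seventh coordinate together with~\cref{lem:respect-sc}. For the lower bound one first applies~\cref{sc-lowering} to reduce any $n\le 6$ counterexample to exactly $n=6$ by padding with inert qubits, then invokes the staircase normal form of~\cref{c3-staircase} to restrict attention to Clifford orbits of staircase-form Toffoli products; after cancelling repeated gates this leaves at most $2^{\binom{6}{3}}=2^{20}$ candidates to test.

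The main obstacle is precisely this enumeration: for each such staircase product $\mu\in\cC_3$ one must certify semi-Cliffordness via~\cref{lem:sc-with-subgps} by exhibiting a maximal abelian subgroup $A\subset\cP_6$ such that $\mu A\mu^{-1}$ is again maximal abelian. The structural insight that makes the search tractable is that a short list of five ``mixed'' abelian subgroups — mixing single-qubit and entangled Paulis, e.g.\ $\la Z_1,Z_2,Z_3Z_4,X_3X_4,X_5,X_6\ra$ — already suffices uniformly across all cases. Once this enumeration is in place, \cref{seven-is-best} follows immediately from the two supporting lemmas.
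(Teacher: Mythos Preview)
Your proposal is correct and matches the paper's approach exactly: the theorem is stated as immediate from \cref{computation-of-7-qubit-ex} (the seven-qubit witness $R$) and \cref{cant-do-six} (the $n\le 6$ exhaustive verification via the staircase form and the short list of maximal abelian subgroups). Your summary of the content of those supporting lemmas is also accurate.
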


\subsection{Rejection of Anderson's conjectures}
\label{subsec:anderson-conj}
\begin{conjecture}[{\cite[Conjecture D.1]{anderson}}]
\label{anderson-conj-d-1}
Any permutation in $\cC_3$ is a product of pairwise commuting Toffoli gates, up to left and right multiplications of Clifford permutations.
\end{conjecture}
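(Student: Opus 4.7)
The plan is to exhibit the seven-qubit permutation $R$ of \cref{fig:seven_perm} as a direct counterexample to \cref{anderson-conj-d-1}, by rewriting Anderson's statement in a form that the paper has already refuted. First I would observe -- as pointed out in the footnote to Result~\ref{rst:semi-clifford} -- that a product of Toffoli gates is pairwise commuting if and only if it is mismatch-free: two Toffoli gates commute exactly when no target of one appears as a control of the other, which for an entire product is the condition that no qubit is used both as a control and as a target. Hence \cref{anderson-conj-d-1} is equivalent to the claim that every permutation in $\cC_3$ admits a decomposition $\phi_1 \mu \phi_2$ with $\phi_1,\phi_2$ Clifford permutations and $\mu$ a mismatch-free product of Toffoli gates.

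Next I would invoke \cref{sc-perms-in-c3}, which states that a permutation in $\cC_3$ admits such a decomposition if and only if it is semi-Clifford. Combining the two reductions, Anderson's conjecture becomes equivalent to the assertion that every element of $\cC_3^\sym$ is semi-Clifford. Finally, I would apply \cref{computation-of-7-qubit-ex}: the gate $R$ lies in $\cC_3$ but is not semi-Clifford, contradicting the rephrased statement and thereby refuting \cref{anderson-conj-d-1}.

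All the serious work has been done elsewhere: the structural characterization in \cref{sc-perms-in-c3} (via the polynomial and mismatch-free machinery of \cref{sec-sc-perm}) and the construction of $R$ together with the degree-$3$ obstruction $R^{-1} \notin \cC_3$ coming from \cref{ck-perm-poly,lem:respect-sc}. I expect the main difficulty -- already overcome in \cref{computation-of-7-qubit-ex} -- to lie in producing $R$ at all: this required the staircase-form theorem \cref{c3-staircase} and the nonobvious conjugation $R = FGF^{-1}$ by the Clifford operator $F$ relating $R$ to the Gottesman--Mochon example. Given those inputs, the refutation of Anderson's conjecture reduces to the two-step rewriting above.
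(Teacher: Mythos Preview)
Your proposal is correct and follows essentially the same route as the paper's proof of \cref{anderson-is-wrong}: rewrite ``pairwise commuting'' as ``mismatch-free'' (the paper does this via \cref{lem:mismatch-free-equals-commute}), deduce that the conjecture would force every $\cC_3$ permutation to be semi-Clifford (the paper uses \cref{mismatch-free-is-sc} directly rather than the two-sided \cref{sc-perms-in-c3}, but only that one direction is needed), and then invoke $R$ from \cref{computation-of-7-qubit-ex} as the counterexample.
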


\begin{conjecture}[{\cite[Conjecture D.2]{anderson}}]
\label{anderson-conj-d-2}
For any permutation $\pi$ and any positive integer $k$, if $\pi \in \cC_k$, then $\pi^{\dagger} \in \cC_k$.
\end{conjecture}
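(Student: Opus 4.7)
The plan is to prove the theorem in two halves, corresponding to the upper and lower bounds on the minimum qubit count.

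For the upper bound ($n = 7$ is achievable), I would exhibit the explicit permutation $R$ from \cref{fig:seven_perm} and verify two things: (a) $R \in \cC_3$, and (b) $R$ is not semi-Clifford. For (a), either check directly that $RX_iR^{-1}$ and $RZ_iR^{-1}$ lie in $\cC_2$ for each $i$, or realize $R$ as $FGF^{-1}$ for the Clifford $F$ defined in the excerpt and the Gottesman--Mochon gate $G \in \cC_3$; since $\cC_3$ is closed under Clifford conjugation (\cref{prop:cliff_hier}), this gives $R \in \cC_3$. For (b), observe that the seventh output coordinate of $R$ has the cubic monomial $a_1a_2a_3$, so by the contrapositive of \cref{ck-perm-poly} (with $k=2$), $R^{-1} \notin \cC_3$. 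Then \cref{lem:respect-sc} forces $R$ not to be semi-Clifford, since semi-Clifford gates in $\cC_3$ have their inverses in $\cC_3$. This is exactly \cref{computation-of-7-qubit-ex}.

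For the lower bound (all permutations in $\cC_3$ on $n \le 6$ qubits are semi-Clifford), I would first reduce to the case $n = 6$ via \cref{sc-lowering}: if there were a non--semi-Clifford permutation $\pi \in \cC_3$ on some $n < 6$ qubits, then $\pi \otimes I_{2^{6-n}}$ would be a non--semi-Clifford permutation in $\cC_3$ on $6$ qubits (membership in $\cC_3$ is preserved by tensoring with identity, by induction on the level). Next, I would apply \cref{c3-staircase} to write any $\pi \in \cC_3$ on six qubits as $\phi_1 \mu \phi_2$ with $\phi_1, \phi_2$ Clifford permutations and $\mu$ a product of Toffoli gates in staircase form; since the semi-Clifford property is preserved under left and right multiplication by Clifford gates, it suffices to verify that each such $\mu$ that happens to lie in $\cC_3$ is semi-Clifford.

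I would then shrink the search space before reaching for the computer. Within a staircase form product, Toffoli gates sharing the same target commute; furthermore $\tof_{i,j,k} = \tof_{j,i,k}$ and $\tof_{i,j,k}^2 = I$, so we may assume $i < j < k$ and that each Toffoli appears at most once. This leaves $\mu$ determined by a subset of the $\binom{6}{3} = 20$ possible Toffoli gates, giving at most $2^{20}$ candidates. For each candidate, an exhaustive computer search checks membership in $\cC_3$ by verifying $\mu X_i \mu^{-1}, \mu Z_i \mu^{-1} \in \cC_2$, and for those that pass this test I would try to exhibit a maximal abelian subgroup $A \subseteq \cP_6$ with $\mu A \mu^{-1} \subseteq \cP_6$, certifying semi-Cliffordness via \cref{lem:sc-with-subgps}. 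The slick step, which keeps the argument short in print, is to prove that one can always pick $A$ from a small fixed list of five subgroups (the five displayed at the end of \cref{cant-do-six}); this would be discovered by the search and then stated as the output.

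The main obstacle is the lower bound: one needs the search to terminate quickly and, more importantly, to produce a \emph{uniform} certificate of semi-Cliffordness for the non--mismatch-free cases (the mismatch-free ones are handled automatically by \cref{mismatch-free-is-sc}). Getting the short list of five candidate abelian subgroups, as opposed to having to enumerate possibilities per candidate $\mu$, is what turns the computation into a human-readable theorem. Once that list is in hand, combining the upper bound (existence of $R$) with the lower bound (no example on $\le 6$ qubits) yields \cref{seven-is-best}.
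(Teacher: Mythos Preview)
Your proposal addresses the wrong statement. The statement in question is \cref{anderson-conj-d-2}, which the paper \emph{disproves} (see \cref{anderson-is-wrong}); you instead outlined a proof of \cref{seven-is-best}. The paper's disproof of \cref{anderson-conj-d-2} is a one-line observation: the permutation $R$ lies in $\cC_3$, while $R^{\dagger} = R^{-1} \notin \cC_3$ because the seventh coordinate of $R$ has degree $3$ (contrapositive of \cref{ck-perm-poly} with $k=2$). Ironically, you state exactly this fact in your ``upper bound'' paragraph when establishing that $R$ is not semi-Clifford, but you never connect it to \cref{anderson-conj-d-2}. Everything else in your proposal---the reduction to $n=6$ via \cref{sc-lowering}, the staircase-form enumeration over $2^{20}$ candidates, and the five-subgroup certificate---is aimed at \cref{cant-do-six} and \cref{seven-is-best}, which are separate results and irrelevant to the conjecture at hand.

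To fix the proposal: simply extract the sentence ``$R \in \cC_3$ but $R^{-1} \notin \cC_3$ since its seventh coordinate has degree $3$'' and observe that $R^{\dagger} = R^{-1}$ for permutation matrices; this already refutes \cref{anderson-conj-d-2} with $k=3$.
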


We use the following lemma to disprove Anderson's two conjectures. 

\begin{lemma} \label{lem:mismatch-free-equals-commute}
Two $C^*X$ gates commute if and only if they have no mismatch (that is, there is no qubit that is used as a target in one and a control in the other).
\end{lemma}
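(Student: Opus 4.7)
The plan is to argue directly from the action of $C^*X$ gates on computational basis states. Write $G_i$ as a $C^*X$ gate with control set $S_i$ and target qubit $t_i$, where by definition $t_i \notin S_i$. On a basis state $|a_1, \ldots, a_n \ra$, the gate $G_i$ replaces $a_{t_i}$ by $a_{t_i} + \prod_{j \in S_i} a_j$ and fixes every other coordinate.

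For the ``if'' direction I would split into two cases. If $t_1 = t_2 = t$, then $t \notin S_1 \cup S_2$ and both composites fix $a_j$ for $j \neq t$ while sending $a_t \mapsto a_t + \prod_{j \in S_1} a_j + \prod_{j \in S_2} a_j$, so $G_1 G_2 = G_2 G_1$. If instead $t_1 \neq t_2$, the no-mismatch hypothesis says $t_1 \notin S_2$ and $t_2 \notin S_1$, so applying $G_1$ preserves every bit read by $G_2$ (the qubits in $S_2 \cup \{t_2\}$), and symmetrically. A short direct computation then confirms that both $G_1 G_2$ and $G_2 G_1$ act by the simultaneous coordinate updates $a_{t_i} \mapsto a_{t_i} + \prod_{j \in S_i} a_j$.

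For the ``only if'' direction I would exhibit an explicit basis state on which the two orderings disagree. Without loss of generality assume $t_1 \in S_2$, so in particular $t_1 \neq t_2$. Set $a_j = 1$ for every $j \in S_1 \cup (S_2 \setminus \{t_1\})$, set $a_{t_1} = 0$, and assign $a_{t_2}$ arbitrarily when it has not already been fixed by the previous clauses. Applying $G_2$ first leaves $a_{t_2}$ unchanged because $a_{t_1} = 0$ makes the $S_2$-product vanish; then $G_1$ flips $a_{t_1}$ to $1$. Applying $G_1$ first flips $a_{t_1}$ to $1$, after which all of $S_2$ equals $1$ and $G_2$ flips $a_{t_2}$. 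The two resulting basis states differ in coordinate $t_2$, so $G_1 G_2 \neq G_2 G_1$.

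The one step requiring care is checking that the above assignment is consistent when both $t_1 \in S_2$ and $t_2 \in S_1$ hold simultaneously: in that subcase the prescription forces $a_{t_2} = 1$, which agrees with ``set every element of $S_1$ to $1$,'' so no conflict arises. This small case-bookkeeping is the only real obstacle; everything else reduces to reading off the effect of $G_1$ and $G_2$ on two particular coordinates.
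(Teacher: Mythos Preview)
Your proof is correct and follows essentially the same approach as the paper: both argue the ``if'' direction by inspection and the ``only if'' direction by exhibiting an explicit computational-basis witness on which the two orderings disagree. The only cosmetic difference is the choice of witness state---the paper simply uses $|1^n\rangle$ (after relabeling so that the mismatched qubit is qubit~$1$), whereas you build a state with $a_{t_1}=0$ and the relevant controls set to~$1$; both choices work for the same reason.
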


\begin{proof}
The ``if'' direction is clear. Let us prove the ``only if'' direction. Suppose for contradiction they have mismatch. Without loss of generality, let the gates be $A$, with qubit $1$ as a control and qubit $2$ as target, and $B$, with qubit $1$ as target. Then $AB|1^n \ra = A|01^{n-1} \ra = |01^{n-1} \ra$, while $BA|1^n \ra = B|101^{n-2} \ra$, which is either $|101^{n-2} \ra$ or $|001^{n-2} \ra$; in either case $BA|1^n \ra \neq AB|1^n \ra$, therefore they do not commute.
\end{proof}

\begin{theorem} \label{anderson-is-wrong}
\cref{anderson-conj-d-1} and~\cref{anderson-conj-d-2} are false.
\end{theorem}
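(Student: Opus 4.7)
The plan is to use the seven-qubit permutation gate $R$ from~\cref{fig:seven_perm} as a single counterexample that refutes both conjectures simultaneously. All the heavy lifting has already been done in~\cref{computation-of-7-qubit-ex}, where we established that $R \in \cC_3$, $R$ is not semi-Clifford, and $R^{-1} \notin \cC_3$ (the last fact coming from~\cref{ck-perm-poly} applied to the degree-$3$ seventh coordinate of $R$). So the proof reduces to stringing these facts together with the new~\cref{lem:mismatch-free-equals-commute}.

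For~\cref{anderson-conj-d-1}, I would argue by contradiction. Suppose $R = \phi_1 \mu \phi_2$ where $\phi_1, \phi_2$ are Clifford permutations and $\mu$ is a product of pairwise commuting Toffoli gates. By~\cref{lem:mismatch-free-equals-commute}, a product of pairwise commuting Toffoli gates is precisely a mismatch-free product of Toffoli gates. Then by~\cref{sc-perms-in-c3}, $R$ would be a semi-Clifford permutation. This contradicts the fact, established in~\cref{computation-of-7-qubit-ex}, that $R$ is not semi-Clifford.

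For~\cref{anderson-conj-d-2}, the refutation is even more direct. Since $R$ is a permutation, its associated matrix is real and orthogonal, so $R^{\dagger} = R^{-1}$. We have $R \in \cC_3$ but $R^{-1} \notin \cC_3$ by~\cref{computation-of-7-qubit-ex}, hence $R^{\dagger} \notin \cC_3$. This gives a direct counterexample for $k = 3$.

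There is no real obstacle to this proof—the construction of $R$ and the verifications in Section~\ref{sec-seven-is-best} do all the nontrivial work, and~\cref{lem:mismatch-free-equals-commute} bridges Anderson's ``pairwise commuting'' phrasing with our ``mismatch-free'' framework. The one thing worth being careful about is making sure the quantifier in~\cref{anderson-conj-d-1} is correctly negated: the conjecture asserts that \emph{every} permutation in $\cC_3$ admits such a decomposition, so exhibiting one that does not suffices, and our chain of implications (commuting $\Rightarrow$ mismatch-free $\Rightarrow$ semi-Clifford) cleanly produces the needed contradiction.
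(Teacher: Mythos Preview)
Your proposal is correct and follows essentially the same approach as the paper: use $R$ as the counterexample, invoke \cref{lem:mismatch-free-equals-commute} to convert ``pairwise commuting'' into ``mismatch-free'' and then conclude semi-Clifford (the paper cites \cref{mismatch-free-is-sc} where you cite \cref{sc-perms-in-c3}, but either works), and use $R^\dagger = R^{-1} \notin \cC_3$ for the second conjecture.
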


\begin{proof}
Suppose~\cref{anderson-conj-d-1} is true. 
By~\cref{lem:mismatch-free-equals-commute}, every permutation in $\cC_3$ is a mismatch-free product of Toffoli gates, up to Clifford permutations on the left and right. So every permutation in $\cC_3$ is semi-Clifford by~\cref{mismatch-free-is-sc}. This is a contradiction, as we know $R$ is a non--semi-Clifford permutation in $\cC_3$ for $n=7$. 

For~\cref{anderson-conj-d-2}, we have $R$ is a permutation in $\cC_3$, while $R^{\dagger} \notin \cC_3$, as $R^{\dagger} = R^{-1}$, and we showed in the proof of~\cref{computation-of-7-qubit-ex} that $R^{-1} \notin \cC_3$. Thus~\cref{anderson-conj-d-2} is false.
\end{proof}

\section*{Acknowledgments}
The work was conducted as a part of the 2024 Summer Program for Undergraduate Research (SPUR) at MIT. 
We thank Jonathan Bloom, David Jerison, and Peter Shor for their mentorship. 
We thank Jeongwan Haah, Aram Harrow, Andrey Khesin, and Anirudh Krishna for helpful discussions. 
X.~Tan would like to thank Robert Calderbank for introducing the problem of characterizing the third-level Clifford hierarchy to her three years ago. 
Z.~He is supported by National Science Foundation Graduate Research Fellowship under Grant No.~2141064. 
X.~Tan is supported by the U.S. Department of Energy, Office of Science, National Quantum Information Science Research Centers, Co-design Center for Quantum Advantage (C2QA) under contract number DE-SC0012704.

\printbibliography

\end{document}